\documentclass[11pt]{article}
\usepackage[a4paper,margin=1in,footskip=0.25in]{geometry}
\usepackage{graphicx}
\usepackage{epsf}
\usepackage{epsfig}
\usepackage{epstopdf}
\usepackage{amsfonts}
\usepackage{amssymb}
\usepackage{amsmath}
\usepackage{latexsym}
\usepackage[all]{xy}
\usepackage{fix-cm}
\usepackage{placeins}
\usepackage{hyperref}
\usepackage{nameref}
\usepackage{textcomp}

\newtheorem{theorem}{Theorem}
\newtheorem{fact}{Fact}
\newtheorem{definition}[theorem]{Definition}
\newtheorem{proposition}[theorem]{Proposition}
\newtheorem{lemma}[theorem]{Lemma}
\newtheorem{claim}[theorem]{Claim}
\newenvironment{proof}{\paragraph{Proof:}}{\hfill$\square$}

\newcommand{\dbl}{[\![}
\newcommand{\dbr}{]\!]}

\newcommand{\dotcup}{\mathbin{\dot\cup}}

\newcommand{\Log}{\mbox{{\sf L}}}

\newcommand{\ShP}{\mbox{{\sf \#P}}}

\newcommand{\Pt}{\mbox{{\sf P}}}
\newcommand{\NP}{\mbox{{\sf NP}}}

\newcommand{\Gl}{\mbox{{\sf GapL}}}

\newcommand{\NC}{\mbox{{\sf NC}}}

\newcommand{\TCz}{\mbox{{\sf TC}$^0$}}

\newcommand{\NCt}{\mbox{{\sf NC}$^2$}}

\newcommand{\ShSACo}{\mbox{{\sf \#SAC}$^1$}}
\newcommand{\ShSACz}{\mbox{{\sf \#SAC}$^0$}}
\newcommand{\GSACo}{\mbox{{\sf GapSAC}$^1$}}

\newcommand{\poly}{\mbox{{\sf poly}}}
%new C= commands

\newcommand{\VP}{\mbox{{\sf VP}}}
\newcommand{\VNP}{\mbox{{\sf VNP}}}

\newcommand{\calC}{\mbox{${\cal C}$}}

\newcommand{\mso}{\mbox{${\sf MSO}$\ }}

\newcommand{\val}[1]{\mbox{val}(#1)}

\newcommand{\comments}[2]{}

% Author macros::begin %%%%%%%%%%%%%%%%%%%%%%%%%%%%%%%%%%%%%%%%%%%%%%%%
\title{Counting Euler Tours in Undirected Bounded Treewidth Graphs}
% \titlerunning{Counting Euler Tours in Undirected Bounded Treewidth Graphs} %optional, in case that the title is too long; the running title should fit into the top page column

\usepackage{authblk}

\author[1,3]{Nikhil Balaji}
\author[1]{Samir Datta}
\author[2]{Venkatesh Ganesan}
\affil[1]{Chennai Mathematical Institute, Chennai, India. \texttt{<nikhil,sdatta>@cmi.ac.in}}
\affil[2]{Birla Institute of Technology and Science - Pilani, India \texttt{venkatesh920@gmail.com}}
\affil[3]{Indian Institute of Technology, Bombay, India. \texttt{nbalaji@cse.iitb.ac.in}}

% \authorrunning{N. Balaji, S. Datta  and V. Ganesan} %mandatory. First: Use abbreviated first/middle names. Second (only in severe cases): Use first author plus 'et. al.'

% \Copyright{Nikhil Balaji, Samir Datta and Venkatesh Ganesan}%mandatory, please use full first names. LIPIcs license is "CC-BY";  http://creativecommons.org/licenses/by/3.0/

% \subjclass{F.1.1 Models of Computation, F.1.3 Complexity Measures and Classes, G.2.2 Graph Theory}% mandatory: Please choose ACM 1998 classifications from http://www.acm.org/about/class/ccs98-html . E.g., cite as "F.1.1 Models of Computation". 
% \keywords{Euler Tours, Bounded Treewidth, Bounded clique-width, Hamiltonian cycles, Parallel algorithms}% mandatory: Please provide 1-5 keywords
% Author macros::end %%%%%%%%%%%%%%%%%%%%%%%%%%%%%%%%%%%%%%%%%%%%%%%%%

%Editor-only macros:: begin (do not touch as author)%%%%%%%%%%%%%%%%%%%%%%%%%%%%%%%%%%
%\serieslogo{cc-by}%please provide filename (without suffix)

% \volumeinfo%(easychair interface)
%   {Prahladh Harsha and G. Ramalingam}% editors
%   {2}% number of editors: 1, 2, ....
%   {35th International Conference on Foundations of Software Technology and Theoretical Computer Science}% event
%   {1}% volume
%   {1}% issue
%   {1}% starting page number
% \EventShortName{FSTTCS 2015}
% \DOI{10.4230/LIPIcs.xxx.yyy.p}% to be completed by the volume editor
% % Editor-only macros::end %%%%%%%%%%%%%%%%%%%%%%%%%%%%%%%%%%%%%%%%%%%%%%%

\begin{document}
\maketitle
\begin{abstract}
We show that counting Euler tours in undirected bounded tree-width graphs is
tractable even in parallel - by proving a $\ShSACo \subseteq \NCt \subseteq \Pt$
upper bound. This is in stark contrast to \ShP-completeness of the same problem
in general graphs. 

Our main technical contribution is to show how (an instance of) dynamic 
programming on bounded \emph{clique-width} graphs can be performed efficiently 
in parallel. Thus we show that the sequential result of Espelage, Gurski and 
Wanke \cite{EGW01} for efficiently computing Hamiltonian paths in bounded 
clique-width graphs can be adapted in the parallel setting to count the number 
of Hamiltonian paths which in turn is a tool for counting the number of Euler 
tours in bounded tree-width graphs.  Our technique also yields parallel 
algorithms for counting longest paths and bipartite perfect matchings in 
bounded-clique width graphs.

While  establishing that counting Euler tours in bounded tree-width graphs
can be computed by non-uniform monotone arithmetic circuits of polynomial 
degree (which characterize \ShSACo) is relatively easy, establishing a uniform
$\ShSACo\ $ bound needs a careful use of polynomial interpolation. 
% We further adapt a technique due to Ben-Or and Cleve~\cite{BOC} in 
% circuit complexity to bring down the complexity to \Gl.
\end{abstract}
% \newpage
\section{Introduction}
An Euler tour of a graph is a closed walk on the graph that traverses every
edge in the graph exactly once. Given a graph, deciding if there is an
Euler tour of the graph is quite simple. Indeed, the famous K\"onigsberg
bridge problem that founded graph theory is a question about the existence of an
Euler tour using each of these bridges exactly once. Euler settled this question in the negative
and in the process gave a necessary and sufficient condition for a graph to
be \textit{Eulerian} (A connected graph is Eulerian if and only if all the vertices
are of even degree). This gives a simple algorithm to check if a graph is
Eulerian.

An equally natural question is to ask for the number of distinct Euler
tours in a graph. For the case of directed graphs, the BEST theorem due
to De Bruijn, Ehrenfest, Smith and Tutte gives an exact formula that gives
the number of Euler tours in a directed graph~\cite{BE87,ST41}
which yields a polynomial time algorithm via a determinant computation.
For undirected graphs, no such closed form expression is known and the
computational problem is \ShP-complete~\cite{BW}. In fact, the problem
is \ShP-complete even when restricted to $4$-regular planar graphs~\cite{GS}. 
So exactly computing the number of Euler tours is not in
polynomial time unless $\ShP = \Pt$.

In this paper, we are concerned with the problem of counting  Euler tours on
graphs of bounded treewidth. Many problems which are $\NP$-hard for general
graphs, can be solved in polynomial time on bounded treewidth graphs. Indeed, a
result of Courcelle~\cite{courcelle} asserts that any graph property that is expressible
in Monadic Second Order logic (with edge quantifiers) can be solved in linear time on
bounded treewidth graphs. Elberfeld et al.~\cite{EJT} adapt the theorem of Courcelle in 
the parallel setting and prove a \Log\ bound. However, Eulerianity is provably
not \mso-expressible~\cite{EF95} and hence the approaches mentioned above are not directly
applicable in our context.

Our strategy to count Euler tours is as follows: Given a bounded treewidth graph $G$,
we count the number of Euler tours of $G$ by counting the number of Hamiltonian tours of the
line graph of $G$, $L(G)$. In general, there is no bijection between these two quantities, but
we show that $G$ can be modified to obtain $G'$ ($tw(G')  \leq tw(G) + 3$) such that $G$, $G'$ have 
the same number of Eulerian tours, which equals the number of Hamiltonian tours of $L(G')$.
Henceforth, we will be primarily interested in line graphs of bounded treewidth
graphs. It is known that such graphs are of bounded clique-width~\cite{SO}

We base our proof on a proof that the decision version of Hamiltonicity is polynomial 
time computable in bounded clique-width graphs~\cite{EGW01}.
We prove that this algorithm can be parallelised and extended to the counting version.
Next, we show that for line graphs of bounded tree-width graphs which form the family of interest,
the clique-width expression can be inferred from the corresponding tree
decomposition. The tree decomposition itself is obtainable by the \Log-version of 
Bodlaender's theorem~\cite{EJT}. 

Our main tool in establishing a uniform \NC-bound for counting Hamiltonian cycles on 
bounded clique-width graphs hinges on \emph{polynomial interpolation}.
While polynomial interpolation has been used successfully 
to compute various graph polynomials~\cite{MRAG}, our use is somewhat indirect and subtle:
it is used by the uniformity machine to populate a table whose entries do not 
depend on the input bounded clique-width expression but only the number of vertices
in the corresponding graph and the clique-width. We then build a monotone arithmetic
circuit that uses the clique-width expression of the graph and entries from this table 
to count the number of Hamiltonian cycles in the clique-width bounded graph. We then 
observe that since the number of distinct Hamiltonian tours of a graph is at most 
exponential in the number of vertices of the graph, and the circuit is monotone,
the formal degree of the circuit must be a polynomial in the size of the input graph.
This allows us to use a result from circuit complexity~\cite{AJMV98} to yield an upper 
bound of \ShSACo\ on the complexity of counting Euler tours on bounded treewidth graphs.

Our techniques also yield a parallel upper bound on the problems of counting
longest paths/cycles and counting bipartite matchings in bounded clique-width graphs.
These are well known problems (and \ShP-complete in general graphs) but their 
(counting) complexity has not been investigated in bounded clique-width graphs. 
While~\cite{DDN} studies the problem of counting longest paths and perfect matchings in bounded 
\emph{tree-width DAGs}, we improve the results by resolving
the problems for bounded \emph{clique-width graphs}
% \footnote{Note that
% our result gives a $\Gl\ $ bound for these problems for bounded clique-width graphs that 
% are line graphs of bounded treewidth graphs. For general bounded clique-width graphs, we have 
% an incomparable (with $\Gl\ $) bound of $\ShSACo$, given the clique decomposition.} 
at the cost of replacing the \Log\ bound by a \ShSACo\ bound where we know that 
$\Log\ \subseteq  \ShSACo\ \subseteq \NCt\ \subseteq \Pt$ \footnote{Note that $\ShSACo\ $ is a 
function class and when we say $\ShSACo\ \subseteq \NCt$, what we actually mean is that any bit of
the $\ShSACo\ $ function family of interest is computable by a $\NCt\ $ circuit family}.

\subsection{Previous Work}
Chebolu, Cryan, Martin  have given a polynomial time algorithm for
counting Euler tours in undirected series-parallel graphs~\cite{CCMsp} and 
they have claimed to extend it to a polynomial time algorithm~\cite{CCMtw} for the
counting Euler tours  in bounded tree-width graphs. We would like to point out that
the only incomplete, unrefereed manuscript available publicly~\cite{CCMtw}
sketches an algorithm that does dynamic programming directly on the tree-decomposition.
Since we show how to obtain the line graph of the bounded tree-width graph efficiently
in parallel and then work on this bounded clique-width graph - our approach
is fundamentally different from that of~\cite{CCMsp,CCMtw}. Another difference
is that their algorithm is not designed to be parallelisable.

Also notice that in a precursor to this paper~\cite{BalajiDatta},
using totally different techniques (basically
applications of the Logspace version of Courcelle's theorem~\cite{EJT})
it was claimed that the number of Euler tours in bounded tree-width
directed and undirected graphs can be counted in Logspace but the approach had 
a serious flaw in the undirected version. Later versions~\cite{BalajiDatta2,BD}
of the paper claim the result only for directed graphs. This work proves a 
slightly weaker version of the result - the upper bound being \ShSACo\ rather
than Logspace.

Given that counting Hamiltonian cycles on bounded clique-width graphs will
suffice for our purposes, one result that is directly relevant is that of
Flarup and Lyaudet~\cite{FL}: They study the expressive power of Perfect
Matching and Hamiltonian polynomials of graphs of bounded clique-width
and show that they can simulate arithmetic polynomials, and are
themselves contained in $\VP$. This yields a $\GSACo$ bound (implicit) for counting
Hamiltonian cycles in bounded clique-width graphs right away. There are two
aspects in which the work of~\cite{FL} differs from our work: Firstly, 
even though their techniques are also inspired from~\cite{EGW01} like ours,
they work with a slightly different notion of clique-width namely 
$W-m-\mbox{clique-width}$\footnote{These are weighted versions of clique-width
and are used to produce weighted graphs. ~\cite{FL} motivate this variant of 
clique-width by observing that since $K_n$ has clique-width $2$, most graph
polynomials are $\VNP$-complete for bounded clique-width graphs.}. Secondly,
in the case of counting Euler tours, from a straight-forward application of~\cite{FL}
the best upper bound that can be obtained from the circuit
families constructed in~\cite{FL} is non-uniform $\GSACo$, whereas we get an upper bound of 
Logspace-uniform\footnote{In an earlier version of this paper, we had erroneously
claimed a \Gl\ upper bound for counting Euler tours. As pointed out to us by Ramprasad Saptharishi,
there is a rather serious gap with this approach.} $\ShSACo$.

%% comparison with makowsky's paper 
There is some similarity that this work bears with that of Makowsky et al.~\cite{MRAG}, in that
both involve polynomial interpolation to count witnesses for a graph theory problem. The
similarity is somewhat superficial because we use interpolation to obtain numbers 
\emph{independent} of the input graph while they interpolate to compute a graph polynomial
that crucially depends on the graph. The choice of graph theory problems is also quite different.
In particular,~\cite{MRAG} does not address the Hamiltonian cycle problem.
\subsection{Our Results}
This is the main theorem of this work:
\begin{theorem}\label{thm:main} \#Hamiltonian Cycles (or Paths) for bounded clique-width graphs
is in \ShSACo. Consequently, \#Euler Tours for bounded tree-width graphs is also in \ShSACo.
\end{theorem}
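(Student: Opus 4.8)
The plan is to reduce \#Euler Tours in bounded tree-width graphs to \#Hamiltonian Cycles in bounded clique-width graphs, and then to put the latter into \ShSACo\ by building a monotone arithmetic circuit of polynomial degree that mimics the Espelage--Gurski--Wanke dynamic programme \cite{EGW01} over a clique-width expression. First I would handle the graph-theoretic reduction: given a bounded tree-width graph $G$, I would modify it to $G'$ with $tw(G') \le tw(G)+3$ (by subdividing or gadgeting so that no two parallel edges/loops confuse the line-graph correspondence) so that $G$ and $G'$ have the same number of Euler tours, and this common number equals the number of Hamiltonian cycles of $L(G')$. Then, using the \Log-version of Bodlaender's theorem \cite{EJT} to obtain a tree decomposition of $G'$, I would show that a bounded clique-width expression for $L(G')$ is computable in \Log\ (this uses the fact that line graphs of bounded tree-width graphs have bounded clique-width \cite{SO}, but crucially I need the \emph{expression}, not just its existence, and it should be readable off the tree decomposition). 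All of this happens in \Log\ $\subseteq\ShSACo$, so it suffices to prove the first sentence of the theorem.

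For \#Hamiltonian Cycles in bounded clique-width graphs, I would set up the EGW dynamic programme: process the clique-width expression bottom-up, and at each node maintain a table indexed by a ``type'' recording, for each of the (boundedly many) label classes, how many path-endpoints of a partial path-cover lie in that class, together with the count of partial path-covers realising that type. The four clique-width operations (introduce a labelled vertex, disjoint union, relabel $\rho_{i\to j}$, edge-addition $\eta_{i,j}$) each update these tables by a fixed rule; disjoint union is a convolution (sum over ways to split the endpoint-budget), edge-addition contracts paths and is where most of the combinatorial bookkeeping sits. The key observation making this parallelisable is that the clique-width expression tree can be rebalanced to logarithmic depth (or the recursion unrolled into an arithmetic circuit of polynomial size), and since the number of Hamiltonian cycles is at most exponential in $n$ and the circuit is monotone, its formal degree is $\poly(n)$; then \cite{AJMV98} converts a monotone arithmetic circuit of polynomial size and polynomial degree into \ShSACo.

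The main obstacle — and the part the introduction flags as needing ``a careful use of polynomial interpolation'' — is \emph{uniformity}. The edge-addition update rule at an $\eta_{i,j}$ node is itself a combinatorial transformation on tables whose entries (the number of ways a given multiset of path-endpoints in classes $i$ and $j$ can be joined up into a prescribed new endpoint-configuration, accounting for which joins close cycles) depend only on $n$ and the clique-width $k$, not on the input expression. A naive circuit computing these coefficients directly might blow up the degree or fail to be \Log-uniform. The fix is to have the uniformity machine precompute these universal coefficient tables by evaluating the relevant (combinatorially defined) polynomials at sufficiently many points and interpolating — the tables are of size polynomial in $n$ for fixed $k$, and each entry is obtained by interpolation in \Log\ — and then to hard-wire these precomputed constants into a monotone arithmetic circuit that reads only the clique-width expression. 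I would need to argue carefully that (i) the quantities being interpolated are genuinely polynomials of bounded degree in a suitable parameter, (ii) enough evaluation points with small values are available, and (iii) the resulting circuit, with these constants inserted, remains monotone with polynomial formal degree so that \cite{AJMV98} applies. Assembling these pieces gives \#Hamiltonian Cycles (and Paths, by the standard endpoint-gadget) for bounded clique-width graphs in \ShSACo, and composing with the \Log\ reduction above yields \#Euler Tours for bounded tree-width graphs in \ShSACo.
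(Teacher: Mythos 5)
Your proposal follows essentially the same route as the paper: subdivide edges so that Euler tours of $G'$ biject with Hamiltonian cycles of $L(G')$, extract a clique-width expression for the line graph in \Log\ from the tree decomposition, run the Espelage--Gurski--Wanke dynamic programme as a monotone arithmetic circuit whose $\eta_{i,j}$-transition coefficients are input-independent tables precomputed by the uniformity machine via (Kronecker substitution plus Lagrange) interpolation, and invoke the degree bound together with \cite{AJMV98} to land in \ShSACo. The only slight divergence is your stated motivation for the subdivision (parallel edges/loops) --- the actual issue is that $L(G)$ can be Hamiltonian without $G$ being Eulerian, which the length-three subdivision rules out --- but the construction and the rest of the argument match the paper.
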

As a bonus we also get the following :
\begin{theorem} \label{thm:bonus}
The following counts can be obtained in \ShSACo\ for bounded clique-width graphs (given a bounded
clique-width expression for the graph):
\begin{enumerate}
\item \#Hamiltonian Cycles
\item \#Longest Paths/Cycles
\item \#Cycle Covers
\item \#Perfect Matchings (for bipartite graphs)
\end{enumerate}
\end{theorem}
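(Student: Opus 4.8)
The plan is to deduce Theorem~\ref{thm:bonus} from the same machinery used for Theorem~\ref{thm:main}, by observing that all four problems are instances of one template: count the spanning subgraphs $H$ of $G$ --- or, for longest paths/cycles, the maximum-size subgraphs --- that obey a prescribed \emph{local degree constraint} at every vertex, together with an \emph{optional global constraint} on the component structure of $H$. Indeed, \#Perfect Matchings is $\deg_H(v)=1$ for all $v$ with no global constraint; \#Cycle Covers is $\deg_H(v)=2$ for all $v$ with no global constraint; \#Longest Paths/Cycles is $\deg_H(v)\le 2$ for all $v$ with $H$ a single path (resp.\ cycle), maximising $|E(H)|$; and \#Hamiltonian Cycles is $\deg_H(v)=2$ for all $v$ with $H$ connected, which is exactly the case already settled by Theorem~\ref{thm:main}. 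Since here the bounded clique-width expression is \emph{given}, we skip the step (needed for Euler tours) that infers the expression from a tree decomposition, and go straight to the parallel, counting form of the dynamic program of~\cite{EGW01}.

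I would first set up the dynamic-programming state at a node of the clique-width expression tree evaluating to a $k$-labelled graph $G_v$. For the purely degree-constrained problems (perfect matchings, cycle covers) the state merely records, for each label $i\in[k]$, how many vertices of label $i$ currently have each feasible partial degree --- $0$ or $1$ for matchings, and $0$, $1$, or ``closed at $2$'' for cycle covers --- a vector in $\{0,\dots,n\}^{O(k)}$, so polynomially many states for fixed $k$. For the connectivity-sensitive problems (Hamiltonian cycles, longest paths/cycles) we additionally carry, following~\cite{EGW01}, the \emph{endpoint profile} of the current partial path system: for every ordered pair of labels $(i,j)$, the number of paths with one endpoint labelled $i$ and the other labelled $j$, a vector in $\{0,\dots,n\}^{O(k^2)}$; for longest paths/cycles we also index the state by the number of edges used so far. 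The crucial observation for \emph{counting} rather than \emph{deciding} is that two partial solutions with the same state behave identically under every remaining operation, so they may be merged and their counts added; hence the state space stays polynomial.

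Next come the transitions. Introducing a labelled vertex, renaming a label, and disjoint union are routine (renaming permutes coordinates, disjoint union convolves the two state vectors). The only substantial operation is $\eta_{i,j}$, which adds all edges between labels $i$ and $j$: one sums over how many of these edges to put into $H$ and over the combinatorial pattern by which they lower degrees and splice path-endpoints, the multiplicity of each pattern being a product of binomial coefficients that depends only on $n$ and $k$, not on $G$ or the expression. A monotone arithmetic circuit laid out along the (polynomial-size) expression tree combines child states using these multiplicities, and the output gate reads off the count --- for longest paths/cycles, after a final scan for the largest edge-count with a nonzero entry. The circuit is monotone, and since each of the four quantities is at most exponential in $n$, the same reasoning as in the proof of Theorem~\ref{thm:main} bounds its formal degree by a polynomial; hence by~\cite{AJMV98} the count lies in (non-uniform) \ShSACo, and --- again as in Theorem~\ref{thm:main} --- polynomial interpolation for populating the $\eta_{i,j}$-transition table upgrades this to a Logspace-uniform \ShSACo\ bound (hence inside $\NCt\subseteq\Pt$).

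The main obstacle is, as in Theorem~\ref{thm:main}, the uniform step: the binomial-type coefficients of the $\eta_{i,j}$ transitions must be produced by the uniformity machine within the target bound, which is where polynomial interpolation is needed and where the argument is genuinely delicate. A secondary issue, specific to the connectivity-sensitive problems, is to confirm that the endpoint-profile bookkeeping of~\cite{EGW01}, designed for a decision procedure, still has only polynomially many states once we are counting, still rules out premature cycle closures while permitting the single closing edge at the end, and --- for longest paths/cycles --- is not blown up by the extra edge-count coordinate. For cycle covers and bipartite perfect matchings these connectivity concerns vanish and only the interpolation step remains; the bipartite restriction for perfect matchings serves only to keep the degree bookkeeping clean and to let the expression be taken compatible with the bipartition. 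I expect the interpolation-based uniformity and the polynomial bound on the endpoint-profile state space to be the two points requiring care, but both are of the same character as in Theorem~\ref{thm:main}.
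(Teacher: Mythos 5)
Your proposal is correct in substance and, for Hamiltonian cycles and longest paths/cycles, coincides with the paper's route: the ``endpoint profile'' you carry is exactly the multiset-of-types vector $\#X$ of Lemma~\ref{lem:main}, and the monotone-circuit/degree-bound/depth-reduction/interpolation pipeline is identical. Where you genuinely diverge is in how the four counts are extracted. The paper proves Theorem~\ref{thm:bonus} as an almost immediate corollary of Lemma~\ref{lem:main}: the single vector $\#X$, indexed by multisets of types (with closed cycles appearing as the empty type), already contains all four answers as different entries or aggregates --- Hamiltonian cycles are $\#X[\langle\emptyset\rangle]$; longest paths/cycles are read off from multisets consisting of one path (resp.\ cycle) type plus the \emph{minimum} number of isolated-vertex types, so no extra edge-count coordinate is needed because the isolated vertices already encode the length; cycle covers are the sum of $\#X[M]$ over multisets $M$ containing only empty types; and bipartite perfect matchings are reduced to counting cycle covers of the biadjacency matrix. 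You instead build problem-specific degree-profile dynamic programs for matchings and cycle covers; this is a legitimate and, for matchings, arguably more direct alternative (it avoids the biadjacency detour), but it obliges you to construct and justify new $\eta_{i,j}$ transition tables, whereas the paper reuses the one interpolated table $C_{\eta_{i,j}}$. Two cautions on your version: first, the $\eta_{i,j}$ multiplicities are a clean product of binomials only in the matching case --- for the degree-$2$ and endpoint-profile cases the chosen new edges form a maximum-degree-$2$ bipartite graph between the two label classes, and one needs the full path-stringing combinatorics of Lemma~\ref{lem:W} (binomials, factorials, powers of $2$, and indicator constraints); second, there is nothing to ``rule out'' about premature cycle closures --- the paper makes closed cycles a first-class type, tracks them, and simply ignores unwanted multisets at the root, and your longest-cycle bookkeeping must likewise carry an explicit cycle count, which your endpoint profile as written omits.
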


\subsection{Overview of Algorithm}
Every Euler tour in a graph yields a Hamiltonian cycle in its line graph. 
Though this map is not bijective we show that we can make it so by altering
the input graph slightly. It is well known~\cite{GW07} that the line graphs
of bounded tree-width graphs have bounded clique-width. We show how to
obtain a bounded clique-width decomposition for the line graph of a bounded
tree-width graph in Logspace using the Logspace version of Courcelle's
Theorem~\cite{EJT} by first obtaining a bounded tree-width decomposition via
a Logspace version of Bodlaender's theorem~\cite{EJT}.

Our main algorithm replaces the sequential procedure from~\cite{EGW01} to decide
if a bounded clique-width graph has a Hamiltonian path. Instead, it computes 
the number of Hamiltonian cycles.  The procedure uses elementary counting coupled
with polynomial interpolation to compute some matrices which are independent
of the input graph depending only on its size.
The matrices are then combined with vectors maintaining counts, along the
structure tree of the clique-decomposition. A degree bound for the monotone
arithmetic circuit then suffices to prove the \ShSACo\ bound.

\subsection{Organization}

The rest of the paper is organized as follows: In Section~\ref{sec:prelims},
we introduce some definitions and results that will be helpful in understanding
the rest of the paper. Section~\ref{sec:preproc} shows how to obtain a clique-width
expression for the line graph of a bounded treewidth graph in Logspace. Section~\ref{sec:algo}
presents a $\ShSACo\ $ implementation of our algorithm to count the number of 
Hamiltonian tours in graphs of bounded clique-width. 
We conclude with some unresolved questions related to this work in Section~\ref{sec:concl}.

\section{Preliminaries}\label{sec:prelims}

\begin{definition}[Line Graph]
 For an undirected graph $G = (V, E)$, the line graph of $G$ denoted $L(G) = (L_V, L_E)$
 is the graph where $L_V = E$ and $(e_i, e_j) \in L_E$ if and only if there exists a
 vertex $v \in V$ such that both $e_i$ and $e_j$ are incident on $v$.
\end{definition}

 \begin{definition}[Treewidth]\label{def:tdecomp}
 Given an undirected graph $G = (V_G, E_G)$ a tree decomposition of $G$ is a tree
 $T = (V_T, E_T)$(the vertices in $V_T \subseteq 2^{V_G}$ are called \textit{bags}), 
 such that 
 \begin{enumerate}
  \item Every vertex $v \in V_G$ is present in at least one bag, i.e., 
  $\cup_{X \in V_T} X = V_G$.
  \item If $v \in V_G$ is present in bags $X_i, X_j \in V_T$, then
  $v$ is present in every bag $X_k$ in the unique path between $X_i$
  and $X_j$ in the tree $T$.
  \item For every edge $(u, v) \in E_G$, there is a bag $X_r \in V_T$ such that
  $u, v \in X_r$.
 \end{enumerate}
 The width of a tree decomposition is $\max_{X \in V_T} (|X| - 1)$. The treewidth of
 a graph is the minimum width over all possible tree decomposition of the graph.
 \end{definition}

\begin{definition}[NLC-width]
Let $k$ be a positive integer. The class $NLC_k$ of labeled graphs 
$G = (V, E, lab_G)$ where $lab_G : V \to [k]$, is recursively defined as follows:
 \begin{enumerate}
  \item The single vertex graph labeled by a label $a$, $\bullet_a$ for $a \in [k]$ is in $NLC_k$.
  \item Let $G = (V_G, E_G, lab_G) \in NLC_k$ and $H = (V_H, E_H, lab_H) \in NLC_k$ be 
  two vertex-disjoint labeled graphs and $S \subseteq [k]^2$, then 
  $G \times_S H = (V', E', lab') \in NLC_k$, where $V' = V_G \cup V_H$ and
  \[
   E' = E_G \cup E_H \cup \{(u,v) | u \in V_G, v \in V_H, (lab_G(u), lab_H(v)) \in S\}
  \]
and for all $u \in V'$,
  \begin{equation*}
   lab'(u) =
   \begin{cases}
      lab_G(u), & \text{if}\ u \in V_G \\
      lab_H(u), & \text{if}\ u \in V_H
    \end{cases}
  \end{equation*}
  \item  Let $G = (V_G, E_G, lab) \in NLC_k$ and $R: [k] \to [k]$ be a function, then
  $\circ_R(G):= (V_G, E_G, lab')$ defined by $lab'(u) = R(lab(u))$ for all $u \in V_G$
  is in $NLC_k$.
  \end{enumerate}
The NLC-width\footnote{NLC stands for \emph{Node Label Controlled}, has its origins in graph
grammars, was defined by Wanke~\cite{Wan94}} of a labeled graph $G$ is the least integer $k$ such that
$G \in NLC_k$. An expression $Y$ built with $\bullet_a, \times_S, \circ_R$, 
for integers $a \in [k]$, $S \in [k]^2$ and $R:[k] \to [k]$ is called a 
NLC-width $k$ expression. The graph defined by expression $Y$ is denoted
by $val(Y)$.
\end{definition}

\begin{definition}[Clique Width]\label{def:cw}
 Let $k$ be a positive integer. The class $CW_k$ of labeled graphs 
 $G = (V, E, lab_G)$ where $lab_G : V \to [k]$ is recursively defined as follows:
 \begin{enumerate}
  \item The single vertex graph labeled by a label $a$, $\bullet_a$ for $a \in [k]$ is in $CW_k$.
  \item Let $G = (V_G, E_G, lab_G) \in CW_k$ and $H = (V_H, E_H, lab_H) \in CW_k$ be 
  two vertex-disjoint labeled graphs. Then $G \oplus H = (V', E', lab') \in CW_k$, where 
  $V' = V_G \cup V_H$ and $E' = E_G \cup E_H$ and for all $u \in V'$
  \begin{equation*}
   lab'(u) =
   \begin{cases}
      lab_G(u), & \text{if}\ u \in V_G \\
      lab_H(u), & \text{if}\ u \in V_H
    \end{cases}
  \end{equation*}
 \item Let $a, b$ be distinct positive integers and $G = (V_G, E_G, lab) \in CW_k$
  be a labeled graph. Then, 
  \begin{enumerate}
   \item[(a)] $\rho_{a \to b}(G) := (V_G, E_G, lab') \in CW_k$ where for all $u \in V_G$
  \begin{equation*}
   lab'(u) =
   \begin{cases}
      lab_G(u), & \text{if}\ lab_G(u) \neq a \\
      b, & \text{if}\ lab_G(u) = a
    \end{cases}
  \end{equation*}
   \item[(b)] $\eta_{a, b}(G) := (V_G, E', lab_G) \in CW_k$ where, 
  \begin{equation*}
   E' = E_G \cup \{(u,v) | u,v \in V_G, lab(u) = a, lab(v) = b\}
  \end{equation*}
  \end{enumerate}
\end{enumerate}
The clique-width of a labeled graph $G$ is the least integer $k$ such that
$G \in CW_k$. An expression $X$ built with $\bullet_a, \oplus, \rho_{a \to b}, 
\eta_{a,b}$ for integers $a, b \in [k]$ is called a clique-width $k$ expression.
%The graph defined by expression $X$ is denoted by $val(X)$. 
By $val(X)$, we denote the graph defined by expression $X$.
\end{definition}

\begin{definition}[Chordal graph, Chordal completion]\label{def:chordal}
A graph is said to be chordal if every cycle with at least $4$ vertices always
contains a chord.  A chordal completion of a graph $G$ is a chordal graph with the same
vertex set as $G$ which contains all edges of $G$.
\end{definition}

\begin{definition}[Perfect Elimination Ordering, Elimination Tree~\cite{golumbic}]
Let $G = (V, E)$ be a graph and $o = (v_1, v_2, \ldots, v_n)$ be an ordering
of the vertices of $G$. Let $N^-(G,o,i)$ and $N^+(G,o,i)$ for $i=1, \ldots, n$
be the set of neighbors $v_j$ of vertex $v_i$ with $j < i$ and $j > i$ respectively.
\begin{eqnarray*}
N^-(G,o,i) &=& \{v_j |(v_i, v_j) \in E \mbox{\ and\ } j < i\}\\
N^+(G,o,i) &=& \{v_j |(v_i, v_j) \in E \mbox{\ and\ } j > i\}
\end{eqnarray*}
The vertex order $o$ is said to be a Perfect Elimination Ordering (PEO) if
for all $i \in [n]$, $N^+(G,o,i)$ induces a complete subgraph of $G$.
The structure of $G$ can then be characterized by a tree $T(G,o) = (V_T, E_T)$
defined as follows:
\begin{eqnarray*}
V_T &=& V \\
E_T &=& \{(v_i, v_j) \in E | i < j \mbox{\ and\ } \forall j', i < j' < j, (v_i, v_{j'})\notin E \}
\end{eqnarray*}
Such a $T(G,o)$ is called the Elimination Tree associated with the graph $G$.
\end{definition}

For more information on Chordal graphs and PEO, we refer the reader to Golumbic's book~\cite{golumbic}.

\begin{definition}[Cycle Cover]\label{def:ccover}
A \textit{cycle cover} $\calC$ of $G = (V,E)$ is a set of vertex-disjoint cycles that cover the vertices of $G$. 
I.e., $\calC = \{C_1, C_2, \ldots, C_k\}$, where $V(C_i) = \{c_{i_1}, \ldots, c_{i_{r(i)}}\} \subseteq V$ such that
$(c_{i_1}, c_{i_2})$, $(c_{i_2}, c_{i_3})$, $\ldots$, $(c_{i_{r(i)-1}}, c_{i_{r(i)}})$, $(c_{i_{r(i)}}, c_{i_1}) \in E(C_i) \subseteq E$ and 
$\uplus_{i=1}^k V(C_i) = V$. The least numbered vertex $h_i \in V(C_i)$, is called the head of the cycle.
\end{definition}

\begin{definition}[\ShSACo]\label{def:shsaco}
\ShSACo\ is the class of functions from $\{0,1\}^n$ to nonnegative integers computed by
polynomial-size logarithmic-depth, semi unbounded arithmetic circuits\footnote{Note that 
such circuits have degree that is at most a polynomial in the number of input variables.}, 
using $+$ (unbounded fan-in) and $\times$ gates (fan-in $2$) and the constants $0$ and $1$.
% \Gl\ is the class of functions from $\{0,1\}^n \to \mathbb{Z}$ computed by polynomial-size
% skew circuits\footnote{A $\times$ gate in an arithmetic circuit is said to be skew if all but
% one of its children are input variables or constants. A circuit in which all $\times$ gates
% are skew is called a skew arithmetic circuit.} using bounded fan-in $+$ and $\times$ gates 
% and the constants $0$ and $1$. Alternately, they are exactly the class of functions that can 
% be expressed as a difference of accepting and rejecting computations of a Nondeterministic
% Logspace Turing machine. \Gl\ is also has the nice characterization that it is the class of 
% problems that are Logspace-reducible to the Determinant.
\end{definition}

For further background on circuit complexity, we refer the reader to~\cite{Vollmer}.

\begin{proposition}[\cite{AJMV98,Vinay91}] \label{prop:ajmv}
Any function $f: \{0,1\}^n \to R$, where $R$ is a semi-ring, computed by 
arithmetic circuits of size $s$ and degree $d$ can be computed by semi-unbounded
arithmetic circuits of size $\poly(s,d)$ and depth $O(\log{d})$. In particular,
all functions computed by polynomial sized circuits of polynomial degree are
exactly those in \ShSACo. 
\end{proposition}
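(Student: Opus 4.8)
The plan is to establish this via the depth-reduction paradigm of Valiant, Skyum, Berkowitz and Rackoff, in the semi-unbounded form worked out in \cite{AJMV98,Vinay91}. Fix an arithmetic circuit $C$ of size $s$ over the semiring $R$ whose output polynomial has (formal) degree $d$. First I would preprocess $C$ into a homogeneous normal form: every gate $v$ carries a nonnegative integer $\deg(v)$ equal to the degree of the polynomial $[v]$ it computes, every $\times$ gate is binary, and every $\times$ gate $v = v_1\times v_2$ has $\deg(v_1),\deg(v_2)\ge 1$. All of this is arrangeable by a standard transformation that blows up the size by at most a $\poly(s,d)$ factor, does not increase $d$, and — importantly for the uniform version — is computable in logspace from a description of $C$.

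The heart of the argument is a degree-halving recursion. The key observation is that if $w$ is a gate with $\deg(w) > \deg(v)/2$, then $[w]$ can occur at most to the first power inside $[v]$, since two vertex-disjoint occurrences would force degree greater than $\deg(v)$; hence there is a well-defined polynomial $[v,w]$ of degree $\deg(v)-\deg(w) < \deg(v)/2$ with $[v] = [v,w]\cdot[w] + (\text{the part of }[v]\text{ not involving }w)$, and — crucially over a mere semiring — this ``coefficient'' polynomial $[v,w]$ can be built from scratch using only $+$ and $\times$, with no cancellation. Then I would write down the two defining recurrences: (i) $[v]$ equals the sum, over a carefully chosen \emph{frontier} of product gates $w=w_1\times w_2$ reachable from $v$ with $\deg(w)>\deg(v)/2\ge\max(\deg(w_1),\deg(w_2))$, of $[v,w]\cdot[w_1]\cdot[w_2]$ (plus a bounded-degree term handled directly); and (ii) an analogous frontier expansion of each $[v,w]$ in terms of objects $[v,w']$ and $[w',w]$ whose degree parameters are all at most half of those on the left-hand side. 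Each application of (i) or (ii) halves a degree budget, so the recursion tree has depth $O(\log d)$.

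Next I would assemble the new circuit. Its gates are indexed by the pairs $(v)$ and $(v,w)$ of original gates together with an associated degree budget drawn from an $O(\log d)$-term sequence, so there are $\poly(s,d)$ of them; recurrences (i) and (ii) are realized with one layer of unbounded fan-in $+$ gates (the frontier sum ranges over at most $s$ choices of $w$) feeding one layer of fan-in-$2$ $\times$ gates, contributing $O(1)$ depth per recursion level and hence total depth $O(\log d)$. The output gate is $[\,\text{root}\,]$ with full degree budget $d$. This produces a semi-unbounded arithmetic circuit of size $\poly(s,d)$ and depth $O(\log d)$, as claimed. For the ``in particular'' clause: any circuit of polynomial size and polynomial degree has $s,d=\poly(n)$, so the above yields a circuit of polynomial size, $O(\log d)=O(\log n)$ depth, and semi-unbounded form — that is, an \ShSACo\ circuit; conversely, any \ShSACo\ circuit already has the required shape and has degree at most $2^{O(\log n)}=\poly(n)$, since each fan-in-$2$ $\times$ gate at most doubles the degree along any root-to-leaf path of length $O(\log n)$.

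The step I expect to be the main obstacle is the normalization and the degree bookkeeping around it: putting $C$ into the homogeneous form in which every product has two positive-degree factors, then checking that the frontier decompositions underlying (i) and (ii) are exhaustive and disjoint, that every $[v,w]$ really is a semiring polynomial produced correctly by these recurrences, and that no step secretly relies on subtraction. Carrying this out while keeping every transformation logspace-uniform — which is what makes the resulting \ShSACo\ bound the uniform one needed in the rest of the paper — is where the care lies; the size and depth accounting is then just routine bookkeeping.
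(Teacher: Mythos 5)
The paper gives no proof of this proposition---it is imported verbatim from \cite{AJMV98,Vinay91}---so there is nothing internal to compare against; your sketch is the standard Valiant--Skyum--Berkowitz--Rackoff depth reduction in the semiring, semi-unbounded form worked out in those references, and it is the right argument. You correctly identify all the load-bearing points: homogenization into products of positive-degree factors, the fact that a gate $w$ with $\deg(w)>\deg(v)/2$ occurs only linearly in $[v]$ so that the coefficient polynomial $[v,w]$ is definable without subtraction, the degree-halving frontier recurrences realized by one unbounded-fan-in $+$ layer and $O(1)$ fan-in-$2$ $\times$ layers per level, and the converse direction for the ``in particular'' clause via the $2^{O(\log n)}$ degree bound on \ShSACo\ circuits.
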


\begin{fact}[Kronecker substitution~\cite{cox}]\label{fact:kron}
Let $P(x_1, x_2, \ldots, x_n)$ be a multivariate polynomial of degree $d$. We replace
every occurence of variable $x_i$ by $x^{d^i}$. This yields an unique univariate polynomial
$Q(x)$ of degree at most $d^{O(n)}$ such that $P$ can be efficiently recovered from the knowledge
of coefficients of $Q$. When the number of variables is a constant, the degree of the multivariate 
polynomial and the univariate polynomial are polynomially related.
\end{fact}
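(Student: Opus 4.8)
The plan is to check that the Kronecker substitution $x_i \mapsto x^{d^{\,i}}$ induces a bijection between the monomials of $P$ and the monomials of $Q$, and to read off from that bijection the uniqueness of $Q$, the recovery procedure, and the relationship between the two degrees.

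First I would normalize so that every variable occurs in $P$ with degree strictly less than $d$ --- this is exactly what the digit argument below needs, and if one only knows that the per-variable (or total) degree is at most $d$ it suffices to carry out everything with $d$ replaced by $d+1$, which affects neither the bound $d^{O(n)}$ nor, for constant $n$, the conclusion $\poly(d)$. Writing $P = \sum_{\mathbf e} c_{\mathbf e}\,x_1^{e_1}\cdots x_n^{e_n}$ with the sum over exponent vectors $\mathbf e = (e_1,\dots,e_n)$ satisfying $0 \le e_i \le d-1$, the substitution carries the monomial $x_1^{e_1}\cdots x_n^{e_n}$ to $x^{\phi(\mathbf e)}$, where $\phi(\mathbf e) = \sum_{i=1}^{n} e_i\, d^{\,i}$.

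The crux is that $\phi$ is injective on these exponent vectors: since each $e_i$ lies in $\{0,\dots,d-1\}$, the string $(0,e_1,\dots,e_n)$ is precisely the base-$d$ representation of $\phi(\mathbf e)$, and base-$d$ representations are unique. Consequently distinct monomials of $P$ are sent to distinct powers of $x$, no cancellation occurs, and the coefficient of $x^m$ in $Q$ equals $c_{\mathbf e}$ when $m = \phi(\mathbf e)$ and is $0$ otherwise. This simultaneously shows that $Q$ is uniquely determined by $P$ and gives the recovery algorithm: for each nonzero coefficient of $Q$, write the corresponding exponent $m$ in base $d$, discard the units digit (which is $0$), and the remaining digits are $(e_1,\dots,e_n)$, with associated coefficient $c_{\mathbf e}$. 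Since base-$d$ expansion of an integer with $O(n)$ digits is elementary --- in particular computable by small uniform circuits --- the recovery is efficient.

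Finally, for the degrees: every exponent appearing in $Q$ satisfies $\phi(\mathbf e) \le (d-1)\sum_{i=1}^{n} d^{\,i} < d^{\,n+1}$, so $\deg Q < d^{\,n+1} = d^{O(n)}$; conversely, a monomial of $P$ of maximal total degree $\delta$ maps to a power of $x$ of degree $\phi(\mathbf e) \ge \sum_i e_i = \delta$, so $\deg Q \ge \deg P$. When $n = O(1)$ these read $\deg P \le \deg Q \le \poly(\deg P)$, i.e.\ the two degrees are polynomially related. I do not anticipate a genuine obstacle here; the only point that wants care is the convention behind ``degree $d$'' --- the per-variable bound is what powers the digit-uniqueness argument while the total-degree reading is what yields the matching lower bound on $\deg Q$, and once both are fixed the remainder is bookkeeping.
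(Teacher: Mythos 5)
Your argument is correct and is the standard base-$d$ digit argument for Kronecker substitution; the paper itself states this as a Fact imported from~\cite{cox} without giving a proof, so there is nothing different to compare against. Your observation that the substitution as literally written ($x_i\mapsto x^{d^i}$ with exponents possibly equal to $d$) needs $d$ replaced by $d+1$ to make the digit-uniqueness argument airtight is a genuine (if minor) point of care, and you handle it correctly without affecting the stated bounds.
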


\section{From Euler Tours to Hamiltonian cycles} \label{sec:preproc}
It is possible to construct a graph $G$ such that $G$ has no Eulerian tours, but
$L(G)$ has a Hamiltonian cycle\footnote{Indeed, there is a $2$-connected graph
-- $K_4$ with one of the edges removed -- which is non-Eulerian but  
its line graph is Hamiltonian.}. Proposition \ref{prop:hnw} gives necessary and sufficient
conditions for when a line graph of a given graph is Hamiltonian. 
\begin{proposition}[\cite{HNW65}]\label{prop:hnw}
 $L(G)$ is Hamiltonian if and only if $G$ has a closed trail that contains
 at least one end point of every edge.
\end{proposition}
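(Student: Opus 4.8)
The plan is to prove both directions by translating between a Hamiltonian cycle of $L(G)$ and a closed trail in $G$ meeting every edge at an endpoint. Recall that vertices of $L(G)$ are edges of $G$, and two are adjacent in $L(G)$ precisely when the corresponding edges of $G$ share an endpoint. First I would set up the correspondence: a closed walk $e_{i_1}, e_{i_2}, \ldots, e_{i_m}, e_{i_1}$ in $L(G)$ (visiting these vertices in order) requires that consecutive $e_{i_j}, e_{i_{j+1}}$ share a vertex of $G$; choosing such a shared vertex $v_j$ for each consecutive pair yields a closed walk $v_1, v_2, \ldots, v_m, v_1$ in $G$ in the sense that $v_{j-1}$ and $v_j$ are the two endpoints of $e_{i_j}$ (after a possible relabeling, since each $e_{i_j}$ must ``connect'' its predecessor's shared vertex to its successor's shared vertex). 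The key observation is that if the $L(G)$-walk is a Hamiltonian cycle, then every edge of $G$ appears exactly once among the $e_{i_j}$, so the induced closed walk in $G$ uses each edge of $G$ at most once, hence is a closed \emph{trail}; and every edge of $G$, being some vertex $e_{i_j}$ of the cycle, contributes at least one of its endpoints (namely $v_{j-1}$ or $v_j$) to the vertex set of this trail, so the trail contains at least one endpoint of every edge of $G$.

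For the converse, suppose $G$ has a closed trail $C$ containing at least one endpoint of every edge. The edges traversed by $C$ naturally form a cycle in $L(G)$ (consecutive edges of the trail share the traversal vertex, and the trail is closed, giving a closed walk in $L(G)$; since $C$ is a trail no edge repeats, so this is in fact a cycle of $L(G)$). This cycle may miss some vertices of $L(G)$, i.e. some edges of $G$ not on $C$. Here is where the endpoint-covering hypothesis is used: every edge $f = \{u,w\}$ of $G$ not on $C$ has, say, $u$ on $C$; pick an edge $g$ of $C$ incident to $u$ (one exists since $u$ lies on the closed trail $C$ and thus has an edge of $C$ at it). Then $f$ and $g$ are adjacent in $L(G)$, so $f$ can be ``spliced'' into the $L(G)$-cycle at the vertex $g$: insert $f$ immediately after $g$ in the cyclic order (so the cycle goes $\ldots, g, f, g', \ldots$) — but this revisits $g$, so instead I would insert $f$ between $g$ and one of its cycle-neighbors only if that neighbor is also incident to $u$, which need not hold; the cleaner route is the standard one from~\cite{HNW65}: build the Hamiltonian cycle of $L(G)$ directly by ordering \emph{all} edges of $G$ so that the ones on $C$ appear in trail order and each off-$C$ edge $f$ is placed adjacent to an on-$C$ edge sharing the guaranteed endpoint, checking that consecutive edges in the final list always share a vertex of $G$ and that the list is a permutation of $E(G)$. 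I would phrase this as: process the trail $C = g_1, g_2, \ldots, g_\ell, g_1$; between $g_j$ and $g_{j+1}$ (which share vertex $v_j$ of $G$), insert all not-yet-placed edges $f$ of $G$ whose guaranteed covered endpoint equals $v_j$, in any order — consecutive inserted edges share $v_j$, the first shares $v_j$ with $g_j$, the last shares $v_j$ with $g_{j+1}$. Every edge of $G$ gets placed exactly once (edges on $C$ by trail order; off-$C$ edges at the slot of one of their covered endpoints, which is some $v_j$ since $C$ visits all its vertices), so the resulting cyclic sequence is a Hamiltonian cycle of $L(G)$.

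The main obstacle is the converse direction's bookkeeping: one must be careful that every off-$C$ edge's guaranteed covered endpoint is actually \emph{visited} by $C$ as one of the $v_j$'s (true because a closed trail visits each of its vertices, and each visit corresponds to some slot $v_j$ between consecutive trail edges — with the caveat of a vertex visited multiple times, where any one occurrence works), and that the insertion preserves the cycle property at the splice points. A degenerate case to handle: if $G$ has an isolated edge or $C$ is a single vertex, but connectivity/the hypothesis rules out problematic cases, and the trivial cases ($|E(G)| \le 1$) are immediate. I expect the forward direction to be essentially a one-paragraph observation and the converse to require the explicit ``insert at slot'' construction above.
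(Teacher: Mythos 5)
The paper does not actually prove this proposition: it is imported verbatim from Harary and Nash-Williams \cite{HNW65} as a black box, so there is no in-paper argument to compare yours against. Judged on its own, your converse direction is sound: the ``insert at slot'' construction (placing every off-trail edge $f$ at an occurrence of its covered endpoint $v_j$ between $g_j$ and $g_{j+1}$) does produce a cyclic ordering of all of $E(G)$ in which consecutive edges share a vertex of $G$, and you correctly isolate the one bookkeeping point, namely that a covered endpoint of $f$ really occurs as some transition vertex $v_j$ of the closed trail. The only loose end there is the degenerate case where the dominating closed trail is a single vertex (e.g.\ $G=K_{1,n}$, whose line graph $K_n$ is Hamiltonian although $G$ has no nontrivial closed trail); this is not ``ruled out by the hypothesis'' as you suggest, but is handled by the separate, easy observation that all edges of $G$ then share one vertex.

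The forward direction, however, has a genuine gap. You assert that for consecutive vertices $e_{i_{j-1}},e_{i_j},e_{i_{j+1}}$ of the Hamiltonian cycle of $L(G)$, the chosen shared vertices $v_{j-1}$ (of $e_{i_{j-1}},e_{i_j}$) and $v_j$ (of $e_{i_j},e_{i_{j+1}}$) ``are the two endpoints of $e_{i_j}$.'' This is false in general: the three edges may all meet at a single vertex of $G$, forcing $v_{j-1}=v_j$ (already in $G=K_{1,3}$ every shared vertex along the Hamiltonian cycle of $L(G)=K_3$ is the centre). If your claim held for every $j$, the induced walk would traverse every edge of $G$ exactly once, i.e.\ you would have shown that $L(G)$ Hamiltonian implies $G$ Eulerian --- which is strictly stronger than the theorem and wrong. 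The correct argument must allow stalls: when $v_{j-1}=v_j$, no edge is traversed at step $j$; the subsequence of non-stall steps still forms a closed trail (each $e_{i_j}$ can be traversed at most once because it occurs exactly once on the Hamiltonian cycle); and each skipped edge $e_{i_j}$ is nevertheless dominated because its endpoint $v_{j-1}=v_j$ lies on that trail (with the all-stalls case again reducing to the star). With that repair your forward direction goes through, and the overall structure matches the classical proof.
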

Given a graph $G$, we want to construct a graph $G'$ such that every closed trail in $G'$
that contains at least one end point of every edge is exactly an Eulerian tour of
$G'$. The following Lemma guarantees exactly this:
\begin{lemma}\label{lem:EHbij}
Given an undirected graph $G$, construct a graph $G' = (V', E')$ from $G$ 
as follows: Replace every edge $e = (u,v)$ of $G$ by path of length three.
Then $G$ and $G'$ have the same number of Eulerian tours and the Eulerian tours 
of $G'$ are in bijection with the Hamiltonian tours of $L(G')$.
\end{lemma}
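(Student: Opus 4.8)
The plan is to prove the two claims separately: first that the edge-subdivision operation preserves the number of Eulerian tours, and then that for the subdivided graph $G'$ the Eulerian tours coincide exactly with the closed trails hitting an endpoint of every edge, so that Proposition~\ref{prop:hnw} (together with the standard bijection between Eulerian tours of a graph and Hamiltonian cycles of its line graph, when the line graph argument applies) gives the correspondence with Hamiltonian tours of $L(G')$.

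For the first claim, I would argue directly that subdividing a single edge $e = (u,v)$ into a path $u - x - y - v$ (with $x,y$ fresh degree-$2$ vertices) induces a bijection on Eulerian tours: any Eulerian tour of $G$ traverses $e$ exactly once, and replacing that single step by the forced two-step detour $u - x - y - v$ (or its reverse, matching the direction in which the tour used $e$) yields an Eulerian tour of the once-subdivided graph, and conversely any Eulerian tour of the subdivided graph must pass through $x$ and $y$ consecutively because they have degree $2$, so it contracts back to an Eulerian tour of $G$. Iterating this over all edges (the operations on distinct edges are independent) shows $G$ and $G'$ have the same number of Eulerian tours. One should be slightly careful to state what "number of Eulerian tours" means — presumably closed walks up to rotation and reflection, or rooted/oriented; the bijection respects whichever convention is fixed, since the detour is canonically determined by the direction of traversal of $e$.

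For the second claim, the key structural observation about $G'$ is that it has no edges between two "original" vertices of $G$: every edge of $G'$ is incident to one of the newly introduced subdivision vertices, each of which has degree exactly $2$. Let $W$ be a closed trail in $G'$ that contains at least one endpoint of every edge. I would show $W$ must actually use every edge. Suppose some edge $f$ of $G'$ is unused; $f$ lies on a subdivided path $u - x - y - v$. Since every subdivision vertex has degree $2$, a closed trail either uses both edges at such a vertex or neither; so if one edge of the path $u-x-y-v$ is missing then $W$ avoids both $x$ and $y$ entirely — but then $W$ contains neither endpoint of the middle edge $(x,y)$, contradicting the hypothesis. Hence $W$ is an Eulerian tour of $G'$. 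The converse (an Eulerian tour trivially contains an endpoint of every edge) is immediate. Combining with Proposition~\ref{prop:hnw}, $L(G')$ is Hamiltonian iff $G'$ is Eulerian, and more precisely the map sending an Eulerian tour of $G'$ to the cyclic sequence of edges it traverses is the desired bijection onto Hamiltonian cycles of $L(G')$ — injectivity and surjectivity here follow from the same degree-$2$ argument, since consecutive edges in a Hamiltonian cycle of $L(G')$ share a vertex of $G'$ and the sharing vertex is forced (a degree-$2$ vertex cannot be shared by two different consecutive pairs).

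The main obstacle is the last point: the classical correspondence "Eulerian tours of $H$ $\leftrightarrow$ Hamiltonian cycles of $L(H)$" is not a bijection for general $H$ because at a high-degree vertex an Eulerian tour visits it several times and a Hamiltonian cycle of $L(H)$ only records which pairs of incident edges are consecutive, losing the pairing structure; what makes it a genuine bijection for $G'$ is precisely that the only vertices where an Eulerian tour could visit more than once are the original vertices of $G$, and at such a vertex the incident edges all go to distinct subdivision vertices — so I need to check that no two Hamiltonian cycles of $L(G')$ that differ only in the "pairing" at an original vertex can both be realized, or rather that the pairing is irrelevant because distinct pairings give distinct (or no) Hamiltonian cycles. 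Handling this bookkeeping carefully — ideally by exhibiting the inverse map explicitly and verifying it is well-defined — is where the real work lies; everything else is routine.
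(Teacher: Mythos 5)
Your proposal is correct, and for the heart of the lemma --- the bijection between Euler tours of $G'$ and Hamiltonian cycles of $L(G')$ --- it is essentially the paper's argument. The paper likewise maps an Euler tour, viewed as an edge sequence, to the corresponding vertex sequence of $L(G')$, and proves surjectivity by showing that three consecutive edges of a Hamiltonian cycle of $L(G')$ cannot all meet at a single original vertex $u$: if they did, the middle subdivision edge $(x_{e''},y_{e''})$ of one of the subdivided paths, having only two neighbours in $L(G')$, could never be visited. That is exactly the degree-$2$ forcing you invoke, localized at the middle edge rather than at the subdivision vertices. Where you diverge: (i) you actually prove the first assertion of the lemma, that subdivision preserves the Euler-tour count, via the forced detour through the degree-$2$ vertices $x,y$; the paper's proof leaves this half implicit, so your treatment is more complete on that point. (ii) Your route through Proposition~\ref{prop:hnw} is unnecessary: once the explicit bijection on tours is in hand, the Harary--Nash-Williams existence criterion plays no role, and the paper does not use it inside the proof either. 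Finally, the ``pairing at a high-degree vertex'' worry you flag at the end is a non-issue for simple graphs: an Euler tour is determined by its cyclic edge sequence, so the map is automatically injective, and the only substantive check is the forced-transition property that both you and the paper establish; your sketch already contains everything needed to close it.
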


 \begin{proof}
 Recall that $G' = (V', E')$ is obtained from $G$ as follows: Replace
 every edge $e = (u,v)$ of $G$ by path of length three, namely $(u, x_e), (x_e, y_e)$,
 $(y_e, v)$. For a graph $G$, let $\mathcal{E}_G$ and $\mathcal{H}_G$ denote the set of
 Euler Tours and Hamiltonian tours of $G$ respectively. We claim the following:

 Consider the map $h : E(G')^m \to V(L(G'))^m$ (where $m = |E(G')|$), defined by
 $h : ET \mapsto HT$. Here $ET = \langle e_1, e_2, \ldots, e_m\rangle$ is an edge sequence of $G'$
 with $e_1$ being the least edge under an arbitrary but fixed ordering of the edges of $G'$;
 $HT = \langle v_{e_1}, v_{e_2}, \ldots, v_{e_m}\rangle$ is the corresponding
 vertex sequence of $L(G')$ (where we associate the edge $e \in E(G)$ with
 vertex $v_e \in V(L(G))$). Then the proof is completed by invoking 
 Lemma~\ref{lem:eulerHam} to show that $h$ is the desired bijection 
 with its domain restricted to the set of Euler tours.
\end{proof}

 \begin{lemma}\label{lem:eulerHam}
 We have the following properties of the map $h$:
 \begin{enumerate}
 \item If $ET \neq ET'$ then $h(ET) \neq h(ET')$
 \item $h(ET)$ is defined for every Euler tour $ET$
 \item $HT = h(ET)$ is a Hamiltonian cycle for an Euler tour $ET$ 
 \item If $HT$ is a Hamiltonian cycle in $L(G')$ then there exists an Euler tour 
 $ET$ in $G'$ such that $h(ET) = HT$ 
 \end{enumerate}
 \end{lemma}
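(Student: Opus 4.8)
The plan is to establish the four properties by exploiting the special structure of $G'$: every original edge $e=(u,v)$ of $G$ has been subdivided into a path $u - x_e - y_e - v$, so in $G'$ the vertices $x_e, y_e$ have degree exactly $2$, the edges $(u,x_e)$ and $(y_e,v)$ are each incident to exactly one ``high-degree'' vertex, and the middle edge $(x_e,y_e)$ is incident only to degree-$2$ vertices. First I would record the key combinatorial consequence of this: in $L(G')$, the vertex $v_{(x_e,y_e)}$ has exactly two neighbours, namely $v_{(u,x_e)}$ and $v_{(y_e,v)}$; hence any Hamiltonian cycle of $L(G')$ is \emph{forced} to traverse the three vertices $v_{(u,x_e)}, v_{(x_e,y_e)}, v_{(y_e,v)}$ consecutively (in one of the two orders). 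This ``gadget-forcing'' observation is what pins down the correspondence.

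\emph{Properties 1 and 3 (injectivity, and well-definedness of the image as a Hamiltonian cycle).} Property~1 is immediate: $h$ simply relabels the edge $e_i$ of $G'$ as the vertex $v_{e_i}$ of $L(G')$, so distinct edge-sequences map to distinct vertex-sequences. For Property~3, suppose $ET=\langle e_1,\dots,e_m\rangle$ is an Euler tour of $G'$. Consecutive edges $e_i, e_{i+1}$ share an endpoint in $G'$, so $v_{e_i}$ and $v_{e_{i+1}}$ are adjacent in $L(G')$; likewise $e_m$ and $e_1$ share an endpoint since the tour is closed. Since an Euler tour uses every edge of $G'$ exactly once, the sequence $\langle v_{e_1},\dots,v_{e_m}\rangle$ visits every vertex of $L(G')$ exactly once, and closes up — so it is a Hamiltonian cycle.

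\emph{Property 2 (domain).} I would argue that a (connected) graph $G'$ obtained by subdividing each edge of $G$ twice is Eulerian if and only if $G$ is, because subdivision does not change the degree of the original vertices and introduces only degree-$2$ vertices; so $G'$ has an Euler tour whenever $G$ does, and when $G'$ has no Euler tour there is nothing to prove. (Strictly, $h$ is a partial map defined on all edge-sequences, and the content of Property~2 is just that it is total on the Euler tours of $G'$, which is how it was used in Lemma~\ref{lem:EHbij}; the edge-sequence of any Euler tour is a well-formed input to $h$, so this is essentially a formality once Properties~1 and~3 are in hand.)

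\emph{Property 4 (surjectivity onto Hamiltonian cycles) — the main obstacle.} This is the substantive direction. Given a Hamiltonian cycle $HT$ of $L(G')$, I want to produce an Euler tour $ET$ of $G'$ with $h(ET)=HT$. The idea is to read $HT$ as a cyclic sequence of edges of $G'$ and show it is in fact a closed trail using every edge exactly once. By the gadget-forcing observation above, for each original edge $e$ the three $G'$-edges of its subdivision path appear consecutively in $HT$ in the order $(u,x_e),(x_e,y_e),(y_e,v)$ or its reverse; in either case, reading these three consecutively is exactly ``walking across'' the subdivided edge $e$ from one original endpoint to the other. Thus $HT$, viewed in $G'$, is a closed walk that visits every edge of $G'$ exactly once; the only thing to check is that consecutive edges in $HT$ genuinely meet at a common vertex in the \emph{walk} sense (i.e.\ the walk is consistent — the common endpoint through which we leave $e_i$ is the one through which we enter $e_{i+1}$), and the gadget structure makes the entry/exit vertices at each step unambiguous. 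Concretely: at an original vertex $w$ of $G'$, $HT$ pairs up the edges $\{(w,x_e) : e \ni w\}$ incident to $w$ into consecutive pairs, and this pairing is exactly a valid ``transition system'' witnessing that the walk passes through $w$ the correct number of times; at a subdivision vertex $x_e$ (or $y_e$) there is nothing to check since its two incident edges are forced consecutive. Assembling these local consistencies gives a single closed trail traversing all of $E(G')$ — an Euler tour — whose image under $h$ is $HT$. The delicate point to get right in the write-up is the bookkeeping that the forced local orderings at the degree-$2$ vertices glue together with the pairing at the original vertices into one globally consistent closed walk (rather than a disjoint union of closed trails); here connectivity of $HT$ as a cycle is exactly what rules out the union-of-trails failure mode, and I expect that is the one place where a careful argument, rather than a one-line observation, is needed.
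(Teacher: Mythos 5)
Your proposal is correct and takes essentially the same route as the paper: the crux of item 4 in both arguments is that the middle edge $(x_e,y_e)$ of each subdivided edge gives a vertex of degree $2$ in $L(G')$, so any Hamiltonian cycle is forced to traverse the three subdivision edges consecutively --- the paper states this contrapositively (three consecutive edges of $HT$ all meeting an original vertex $u$ would leave some middle vertex unvisitable), while you state it directly, and the remaining items are handled identically (items 1--3 being immediate in both).
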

 \begin{proof}{(of Lemma~\ref{lem:eulerHam})}
 \begin{enumerate}
 \item Obvious from definitions of $h, L(G')$.
 \item Obvious from definition of $h$.
 \item If $e,e'$ are consecutive edges in $ET$, then they must share a vertex
 since $ET$ is an Euler tour and hence $v_{e}, v_{e'}$ must be adjacent
 in $L(G')$. Also since $ET$ is a permutation of all the edges of $G'$, therefore
 $h(ET)$ is a permutation of all the vertices of $L(G')$.
 \item From the way $G'$ is obtained from $G$, if $v_{e_{i-1}},v_{e_i},v_{e_{i+1}}$ are successive
 vertices on an arbitrary Hamiltonian Tour $HT$ of $L(G')$, then $e_{i-1},e_i,e_{i+1}$ cannot all be
 incident on a vertex $u \in V(G) \cap V(G')$.  For, suppose they were, then
 there exist distinct vertices $a,b,c \in V(G) \cap V(G')$, such that
 $e_{i-1},e_i,e_{i+1}$ are subdivision edges of the edges $e' = (u,a), e'' = (u,b), e''' = (u,c)$.
 But then it is easy to see that the edge $(x_{e''}, y_{e''})$ -- the middle
 subdivision edge of $e''$ -- cannot be traversed in $ET$. This is 
 since $e_i = (u,x_{e''})$, one of its only two neighbours is not used to traverse it.
 
 This implies that if $HT = v_{e_1},v_{e_2}\ldots,$ then $e_1,e_2,\dots$ is an Euler Tour
 of $G'$. Indeed, a Hamiltonian path in $L(G')$ is a permutation
of the vertices $v_e$'s of $L(G')$, thus induces a permutation of the edges
of $G'$. But a sequence $e_1,\ldots,e_m$ is an Euler tour iff for every $i$
the vertex incident on edges $e_{i-1},e_i$ and the vertex incident on edges
$e_i,e_{i+1}$ are distinct (and form the two endpoints of $e_i$) -- which follows from
the previous paragraph, completing the proof. 
 \end{enumerate}
\end{proof}

Notice that $G$ is a minor of $G'$, and the tree decomposition of $G'$ can be obtained
from that of $G$ by locally adding to each bag containing an edge $e$ of $G$, the extra
vertices and edges of the path of length three. Hence, the following is immediate:
\begin{proposition}
$G$ has bounded treewidth iff $G'$ has bounded treewidth.
\end{proposition}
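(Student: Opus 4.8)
The plan is to prove both directions of the equivalence ``$G$ has bounded treewidth iff $G'$ has bounded treewidth'' by constructing, for each direction, a tree decomposition of one graph from a tree decomposition of the other, with the width blowing up by only a constant (in fact by at most $3$, as claimed in the abstract's bound $tw(G') \leq tw(G)+3$).

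First I would handle the direction ``$G$ bounded treewidth $\Rightarrow$ $G'$ bounded treewidth'', which is the direction actually used later in the paper and the one hinted at in the paragraph preceding the statement. Fix a tree decomposition $T = (V_T, E_T)$ of $G$ of width $w = tw(G)$. For each edge $e = (u,v) \in E_G$, property~3 of Definition~\ref{def:tdecomp} guarantees a bag $X_r$ with $u, v \in X_r$; pick one such bag and call it $X_{r(e)}$. Now modify $T$ as follows: for each edge $e$, attach to the node $X_{r(e)}$ a small path of new bags that introduce the subdivision vertices $x_e, y_e$. Concretely one can hang a new leaf bag $\{u, v, x_e, y_e\}$ (or split this into two bags $\{u, x_e, y_e\}$ and $\{x_e, y_e, v\}$ to keep the size increase minimal) adjacent to $X_{r(e)}$. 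I would then verify the three tree-decomposition axioms for the resulting tree $T'$ over $G'$: every old vertex is still covered; every new vertex $x_e, y_e$ appears (in the newly attached bag); connectivity holds because the new bags form a connected appendage touching $T$ only at $X_{r(e)}$, and $u$ (resp.\ $v$) already lies in a connected subtree of $T$ containing $X_{r(e)}$; and every edge of $G'$ — namely the three subdivision edges $(u,x_e),(x_e,y_e),(y_e,v)$ — lies in the attached bag, while every old edge not incident to the subdivided edge is unaffected. The width increases by at most $3$ since each bag grows by at most the two vertices $x_e, y_e$ (with care, and since old edges $e$ are no longer edges of $G'$ one does not even need $u,v$ together any more, but keeping them is harmless and simpler).

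For the converse direction, ``$G'$ bounded treewidth $\Rightarrow$ $G$ bounded treewidth'', the cleanest route is to invoke the fact — already observed in the excerpt — that $G$ is a minor of $G'$: contracting the two subdivision edges $(u,x_e)$ and $(x_e,y_e)$ (equivalently, contracting the length-three path back down to the single edge $e$) recovers $G$ from $G'$. Since treewidth is minor-monotone (a standard fact: contracting an edge or deleting a vertex/edge never increases treewidth), we get $tw(G) \leq tw(G')$ immediately. Alternatively, and without quoting minor-monotonicity as a black box, one can directly transform a tree decomposition $T'$ of $G'$ into one of $G$ by deleting every occurrence of each $x_e$ and $y_e$ from the bags; the only axiom needing a moment's thought is property~3 for the edge $e = (u,v) \in E_G$, which follows because in $T'$ the vertex $x_e$ (say) sits in some bag together with $u$ and in some bag together with $v$, and by connectivity of $x_e$'s bags there is a single bag containing $x_e$, $u$, and $v$ simultaneously — so after deleting $x_e, y_e$ that bag witnesses the edge $e$. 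This does not even increase the width.

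I do not expect any serious obstacle here; the statement is a routine ``treewidth is preserved under bounded-degree local surgery / edge subdivision'' fact, and the proof is a matter of carefully checking the three axioms of Definition~\ref{def:tdecomp} after the surgery. The only point requiring a little attention is the connectivity axiom (property~2) in the forward direction — one must make sure that the newly attached bags for edge $e$ are hung off a bag $X_{r(e)}$ that actually contains both $u$ and $v$, so that the subtrees of $T'$ induced by $u$ and by $v$ remain connected after the appendage; this is exactly why property~3 for $G$ is invoked to choose $X_{r(e)}$. Given that, I would state the forward construction explicitly, check the axioms in one short paragraph, and dispatch the converse by the minor-monotonicity remark already present in the text.
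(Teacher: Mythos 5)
Your proposal is correct and follows essentially the same route as the paper, which disposes of the proposition in a single preceding sentence: $G$ is a minor of $G'$ (giving $tw(G)\leq tw(G')$ by minor-monotonicity), and a tree decomposition of $G'$ is obtained from one of $G$ by locally adding the subdivision vertices $x_e,y_e$ next to a bag containing both endpoints of $e$. The only slip is in your optional direct argument for the converse, where $x_e$ need not share a bag with $v$ (it is adjacent only to $u$ and $y_e$, so one must contract the path step by step rather than find one bag containing $x_e,u,v$); this is moot, since the minor-monotonicity argument you give first is exactly the paper's.
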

\begin{proposition}[\cite{GW07}]
If $G$ is of treewidth $k$, then $L(G)$ has clique-width $f(k) = 2k + 2$.
\end{proposition}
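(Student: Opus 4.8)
The plan is to build a clique-width expression for $L(G)$ by bottom-up induction along a width-$k$ tree decomposition of $G$. First I would fix such a decomposition $T$ (available for free here, or via the \Log-version of Bodlaender's theorem) and, after the usual normalization, work with a rooted \emph{nice} decomposition whose nodes are of the leaf, introduce-vertex, introduce-edge, forget-vertex and join types; equivalently one may think in terms of a perfect elimination ordering $v_1,\dots,v_n$ of a width-$k$ chordal completion of $G$, chosen to be a post-order of the elimination tree, so that the set of \emph{active} vertices (those lying simultaneously in an already-processed and a not-yet-processed bag) is always contained in a single bag, hence has size at most $k+1$.

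For a node $t$ let $G_t$ be the subgraph of $G$ consisting of the vertices and edges introduced in the subtree below $t$. The invariant I would maintain is that we have a clique-width expression evaluating to $L(G_t)$ in which each active edge of $G_t$ (an edge at least one of whose endpoints is still active at $t$) carries the label ``waiting on $w$'', where $w$ is whichever endpoint of the edge is forgotten latest, while every finished edge (both endpoints already forgotten) carries a single dead label. Thus at $t$ one uses only the $\le k+1$ labels ``waiting on $x$'' for $x$ in the bag $X_t$, the dead label, and a constant number of scratch labels used transiently inside a step.

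Then I would verify the inductive step for each node type. Leaf and introduce-vertex nodes are trivial (no new $L(G)$-vertex is created). At a forget-vertex node for $v$, every edge currently labelled ``waiting on $v$'' becomes finished, so a single $\rho$ relabels that whole class to the dead label, and the remaining active edges incident to $v$ were already labelled by their later endpoint, so nothing else changes. At an introduce-edge node for $e=(u,v)$ (both in $X_t$), I would create a new $L(G)$-vertex with a scratch label, apply $\eta$ between the scratch label and the classes ``waiting on $u$'' and ``waiting on $v$'' — these classes are exactly the already-present edges sharing $u$, respectively $v$, i.e.\ precisely the $L(G)$-neighbours of $e$ that exist so far — and finally $\rho$ the scratch label to ``waiting on $w$'' with $w\in\{u,v\}$ forgotten latest. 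At a join node with children $t_1,t_2$ and common bag $X$, take the $\oplus$ of the two sub-expressions; since a $G_{t_1}$-edge and a $G_{t_2}$-edge share a vertex only if that vertex lies in $X$, and no $G$-edge is introduced twice, the only missing $L(G)$-adjacencies are, for each $x\in X$, those between the edges ``waiting on $x$'' from the two sides, which one realizes by temporarily shifting one side's labels to a disjoint copy, running the $k+1$ operations $\eta(\text{side-}1\text{ waiting on }x,\ \text{side-}2\text{ waiting on }x)$, and merging back.

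The step I expect to be the real obstacle is driving the number of labels down to exactly $2k+2$. The join naively forces a second disjoint copy of the $k+1$ ``waiting'' labels, and any attempt to be lazy about which endpoint an edge waits on tempts one into labelling an edge by the \emph{pair} of bag-vertices it meets, which blows the count up to $\Theta(k^2)$; getting the optimal constant requires restructuring the join so that the second side is absorbed edge-by-edge (so that only a single scratch label ever coexists with the $2(k+1)$ waiting labels, with the dead/scratch slots of the two copies identified), which is exactly the careful accounting of \cite{GW07}, so in this paper we simply invoke their bound. Note, however, that for everything else in this paper one only needs \emph{some} bound on the clique-width, and the pair-labelling already yields a clean $O(k^2)$ expression computable within the same resources as the tree decomposition itself.
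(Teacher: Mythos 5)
Your core invariant is broken, not merely suboptimal. If each active edge of $G_t$ carries only the label ``waiting on $w$'' for its \emph{latest-forgotten} endpoint $w$, then at an introduce-edge node for $e=(u,v)$ the class ``waiting on $u$'' is \emph{not} the set of already-present edges sharing $u$: an edge $e'=(u,w)$ with $w$ forgotten later than $u$ sits in the class ``waiting on $w$'', so your $\eta$ with ``waiting on $u$'' misses the required $L(G)$-adjacency between $e$ and $e'$, while applying $\eta$ to ``waiting on $w$'' would wrongly connect $e$ to every edge in that class. The same defect hits the join: an edge from side $1$ waiting on $u$ and an edge from side $2$ waiting on $z$ can share a third bag vertex $v$, and your per-$x$ matching of the two ``waiting on $x$'' classes never creates that adjacency. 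A single label recording one endpoint simply does not carry enough information to realize line-graph adjacency by label-pair operations; this is exactly why your own fallback jumps to labelling by the \emph{pair} of active endpoints. That fallback does give a correct $O(k^2)$ clique-width expression and would suffice for everything else in the paper, but it does not prove the stated bound $2k+2$, which you end up simply citing.

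The paper's actual route is different and avoids the problem: it fixes a proper $(k+1)$-colouring of a chordal completion of $G$ and processes vertices along a perfect elimination ordering, so that when $v_i$ is eliminated its later neighbours $N^+(G,o,i)$ form a clique inside one bag and hence have pairwise distinct colours; each edge is labelled by the colour of its not-yet-eliminated endpoint, and within the relevant active region that colour identifies the endpoint uniquely. This yields an NLC-width $(k+2)$ expression for $L(G)$ (Lemma~\ref{lem:gw}), which Proposition~\ref{prop:nlctocw} then converts to a clique-width $(2k+2)$ expression. If you want to prove the proposition rather than invoke it, you need this colour-as-identity device (or an equivalent) to collapse your pair-labels back down to $O(k)$.
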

\begin{proposition}[\cite{EJT}]\label{prop:ejt}
Given a bounded treewidth graph $G$, a balanced tree decomposition\footnote{A tree
decomposition of a graph is said to be balanced if the tree
underlying the decomposition is balanced} of $G$ is obtainable in $\Log$. %(and hence in $\SACo$).
\end{proposition}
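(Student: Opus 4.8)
The plan is to derive this directly from the logspace version of Bodlaender's theorem of Elberfeld, Jakoby and Tantau~\cite{EJT}, and then to observe that the decomposition it produces is already of logarithmic depth. Recall what that theorem gives: for every fixed $k$ there is a logspace transducer that, on input a graph $G$, either certifies $tw(G)>k$ or outputs a tree decomposition of $G$ of width $O(k)$ with $\poly(n)$ bags. For a class of bounded treewidth the first alternative never triggers, so we obtain some tree decomposition $T$ of $G$ in \Log. The only thing left to argue is that $T$ can be taken \emph{balanced}, i.e.\ that the underlying tree has depth $O(\log n)$.

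First I would recall the classical rebalancing of tree decompositions. Starting from any width-$w$ decomposition, one deletes redundant bags so that at most $n$ remain, then makes the tree binary by local surgery (each such step is clearly in \Log), and finally splits the tree recursively along balanced tree-separators --- a node or edge whose removal leaves every component of size at most two-thirds of the current region. A node of the new tree is identified with a connected region of the old tree, and its bag is taken to be the union of the $O(1)$ boundary bags of that region; this multiplies the width by a constant (one gets width at most $3w+2$), while the recursion depth, and hence the depth of the new decomposition, is $O(\log n)$. Each individual primitive here --- counting subtree sizes, locating a centroid, finding the component of a given vertex after deleting a small set of nodes --- is a reachability/counting task in a forest and hence lies in \Log.

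The hard part will be executing this recursive construction within \emph{logarithmic} space rather than the $O(\log^2 n)$ space a textbook divide-and-conquer would use: naively, naming a region at recursion depth $d=\Theta(\log n)$ requires remembering the chain of $d$ separators chosen so far. This is exactly the technical heart of~\cite{EJT}: they arrange the relevant separators into a hierarchy that can be recomputed on the fly, so that a region is described by only $O(\log n)$ bits and ``is vertex $v$ in region $R$?'' is decided in \Log by re-deriving the separators along the logarithmically long path instead of storing them. Consequently their construction can be made to emit a logarithmic-depth --- hence balanced --- decomposition directly, and the proposition follows. (If one were content with a weaker overall bound, the $O(\log^2 n)$-space rebalancing above would already yield the algorithm; the point of the \Log bound is precisely that a logspace preprocessing step composes with the \ShSACo\ main routine to keep the whole computation inside \ShSACo, since $\Log \subseteq \ShSACo$ and \ShSACo\ is closed under the relevant composition.)
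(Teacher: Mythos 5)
The paper offers no proof of this proposition at all --- it is imported as a black-box citation of \cite{EJT} --- and your proposal likewise defers the genuinely hard part (carrying out the recursive separator-based rebalancing in $O(\log n)$ rather than $O(\log^2 n)$ space) to the machinery of \cite{EJT}, so the two take essentially the same route. Your surrounding sketch of the classical rebalancing and of why each primitive is a logspace forest-reachability task is accurate and consistent with what \cite{EJT} actually provides.
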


We first need the Perfect Elimination Ordering(PEO) of the vertices of the graph. 
It is known that a graph has a PEO if and only if it is chordal. Since we can do
a chordal completion of a bounded treewidth graph (while preserving treewidth),
such an ordering of the vertices always exists. Recently Arvind et al. gave a Logspace
procedure for obtaining a PEO in $k$-trees (which are maximal treewidth-$k$ graphs).
We adapt this for graphs that are chordal completions of bounded treewidth graphs:

\begin{lemma}[Adapted from~\cite{ADKK12}]\label{lem:adkk}
Given a balanced tree decomposition of a bounded treewidth graph $G$,
a Perfect Elimination Ordering and the corresponding elimination
tree of a chordal completion of $G$, which is a balanced binary tree of depth
$O(\log{n})$, can be computed in $\Log$.
\end{lemma}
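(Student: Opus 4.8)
The plan is to bootstrap from the balanced tree decomposition guaranteed by Proposition~\ref{prop:ejt} and then re-run, in $\Log$, the construction of~\cite{ADKK12} (originally phrased for $k$-trees) on a chordal completion of $G$. First I would take the balanced binary tree decomposition $T$ of $G$ of width $\le k$ and depth $O(\log n)$ (Proposition~\ref{prop:ejt}; $T$ can be assumed binary of bounded degree by standard $\Log$ surgery) and form a chordal completion $H$ of $G$: adding, for every bag $X_t$, all edges inside $X_t$ already produces a graph that is chordal (a graph is chordal iff it has a tree decomposition all of whose bags induce cliques) and for which $T$ is still a width-$k$ tree decomposition, so $H$ has the same vertex set as $G$ and bounded treewidth. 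However, as explained below, this bare clique-completion is not quite enough for the conclusion about the elimination tree, so I would in fact complete to a slightly larger chordal graph of treewidth $k'=O(k)$ by also recording along each tree-edge of $T$ the $O(1)$ auxiliary adjacencies that~\cite{ADKK12} use to make minimal separators behave like the cliques of a $k'$-tree. All of this is pointer-chasing along the already-explicit tree $T$ and is computable in $\Log$.

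Next I would extract the ordering. Root $T$ at $r$; for $v\in V(H)$ let $\beta(v)$ be the node of $T$ closest to $r$ whose bag contains $v$ (well defined, since the bags containing $v$ form a connected subtree of $T$). Order the vertices so that $u$ precedes $v$ whenever $\beta(u)$ is a proper descendant of $\beta(v)$, breaking ties inside a single bag by the fixed within-bag order supplied by the Step-1 construction; computing $\beta(v)$, comparing depths in $T$, and emitting the order $o$ are all $\Log$ tasks given $T$. To see that $o$ is a PEO, take $w\in N^+(H,o,v)$: some bag $X_t$ contains both $v$ and $w$, so $\beta(v)$ and $\beta(w)$ are both ancestors-or-equal of $t$ and hence comparable; $\beta(w)$ cannot be a proper descendant of $\beta(v)$ (else $w$ would precede $v$), so either $\beta(w)=\beta(v)$ or $\beta(w)$ is a proper ancestor of $\beta(v)$, and in the latter case the path from $\beta(w)$ to $t$ passes through $\beta(v)$, forcing $w\in X_{\beta(v)}$. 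Thus all of $N^+(H,o,v)$ lies in the single bag $X_{\beta(v)}$, which is a clique of $H$, so $o$ is a PEO.

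For the elimination tree $T(H,o)$, the parent of $v$ is its smallest higher neighbour $w$, which by the previous paragraph lies in $X_{\beta(v)}$; hence $\beta(\mathrm{parent}(v))$ is an ancestor-or-equal of $\beta(v)$ in $T$. So along any root path of $T(H,o)$ the value $\beta(\cdot)$ moves monotonically towards $r$, and at most $|X_t|\le k'+1$ consecutive vertices on such a path share a common $\beta$-bag; since $T$ has depth $O(\log n)$, $T(H,o)$ has depth $O(k'\log n)=O(\log n)$. Binariness is the point where the bare clique-completion is insufficient — a star already has an elimination tree of unbounded degree — and it is here that the auxiliary adjacencies of Step 1, i.e. the ``$k'$-tree-isation'' of~\cite{ADKK12}, are used: they fix the within-bag orders and the way the two child-subtrees of each node of $T$ attach, so that every vertex acquires at most two children in $T(H,o)$. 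Re-verifying that this local bookkeeping keeps the graph chordal, keeps $o$ a PEO, and is realizable by a $\Log$ transducer reading the already-balanced binary tree $T$ is the main technical content, and it should follow the corresponding argument of~\cite{ADKK12} essentially verbatim, the only change being that separators are read off from bags of $T$ rather than discovered inside a $k$-tree.

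The hard part will be exactly this last step: getting (i) binariness, (ii) $O(\log n)$ depth, and (iii) $\Log$ computability simultaneously. Point (i) forces adding more than the minimal fill-in and being careful with orderings; point (ii) is what the balance of $T$ buys us, but only after checking that the $\beta(\cdot)$-monotonicity above survives the extra edges; and point (iii) is unproblematic for the navigation of $T$ but requires the~\cite{ADKK12} bookkeeping to be stateless enough to run in logarithmic space — which is precisely why having the balanced decomposition handed to us by Proposition~\ref{prop:ejt}, rather than having to find separators ourselves, is essential.
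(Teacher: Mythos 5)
Your outline coincides with the paper's: take the balanced decomposition from Proposition~\ref{prop:ejt}, chordalize by filling bags, read a PEO off the tree structure, and inherit the $O(\log n)$ depth of the elimination tree from the depth of the decomposition. Your arguments that the resulting order is a PEO (all of $N^+(H,o,v)$ lands in the single clique bag $X_{\beta(v)}$) and that the elimination tree has depth $O(k\log n)$ (the parent map moves $\beta(\cdot)$ weakly rootward, with at most $|X_t|$ consecutive vertices sharing a top bag) are fine, and are if anything spelled out more carefully than in the paper.

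The genuine gap is the one you flagged yourself and then deferred: the bounded arity of the elimination tree. This is not an optional refinement --- it is exactly the part of the conclusion that Lemma~\ref{lem:gw} consumes (the NLC-width parse tree is built over the elimination tree, and its $O(\log n)$ height is argued from ``constant arity and depth $O(\log n)$''). Note that strict \emph{binariness} is not actually needed; the paper's own proof only delivers ``at most $k$ children per node,'' and that suffices downstream. But some constant bound is needed, your star example shows the naive bag-filling plus depth-of-$\beta$ ordering does not give one, and ``record the $O(1)$ auxiliary adjacencies that \cite{ADKK12} use'' is not a construction --- which adjacencies, why chordality and the PEO survive, and why the arity bound then holds are precisely the content being asked for. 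The paper closes this differently and without extra fill-in edges: it does not order by $\beta$ at all, but iteratively peels simplicial vertices from the (bag-completed) chordal graph into layers $R_0,R_1,\ldots,R_l$ with $l=O(\log n)$, splitting the up-to-$k$ simplicial vertices of a bag into sublayers $R_{ij}$, $j\in[k]$, and takes the PEO to be the reverse layer order; the claimed payoff of this finer layering is that each elimination-tree node acquires at most one child per sublayer, i.e.\ at most $k$ children. To complete your proof you should either adopt that layered peeling and verify the per-sublayer child bound, or actually exhibit and verify the \cite{ADKK12} bookkeeping you allude to; as written, the arity bound --- and hence the lemma as stated --- is unproven.
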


\begin{proof}
We first do a chordal completion of $G$ by adding edges to every bag in the tree
decomposition to ensure that each bag contains a simplicial vertex (it could contain more
than one, but at most $k$ since the treewidth is at most $k$). Now, we can find a
partition of the vertex set of $G$ -- $V(G) = R_0 \dotcup R_1 \dotcup \ldots \dotcup R_{l}$
as follows: First, pick one simplicial vertex from each bag in the tree decomposition
and make the layer $R_0$. If these are more than one simplicial vertex in a bag, these
are added to the sublayers of $R_0$, of which there could be at most $k$ many of them
which we call $R_{0j}$ for $j \in [k]$ (once a vertex is picked this way, it is removed
from the bag). Since the graph is chordal, this process results
in a chordal graph again, and we now do the same process iteratively, and call the 
sets of simplicial vertices so obtained, $R_1, R_2, \ldots, R_l$, each of which have
appropriate sublayers whenever there are more than one simplicial vertex in the bag (and 
we will exhaust all the vertices in the process). Note that this process can go on for at
most $l = O(\log{n})$ steps, which is the diameter of the graph (This is because we started
with a balanced binary tree decomposition of height $O(\log{n})$ and since every bag is a
clique after the chordal completion, the distance between any two nodes in this tree decomposition
is $O(\log{n})$.

Now we claim that if we order $R_{01}, \ldots, R_{lk}$ in the reverse order and within each
of these $R_i$, we order the vertices arbitrarily, we obtain a PEO of the graph. This follows
straight away from the definition of a PEO and the construction of the $R_i$'s.

Recall that an elimination tree for a graph $G = (V, E)$ and PEO $o = (v_1, v_2, \ldots, v_n)$ is
$T(G,o) = (V_T, E_T)$ and is defined as:
\begin{eqnarray*}
V_T &=& V \\
E_T &=& \{(v_i, v_j) \in E | i < j \mbox{\ and\ } \forall j', i < j' < j, (v_i, v_{j'})\notin E \}
\end{eqnarray*}

$T(G,o)$ is a tree because every vertex $v_i$, $i < n$ is adjacent to exactly one
vertex $v_j$ with $j > i$. We can now construct the elimination tree from our PEO
obtained from the $R_i$'s. Note that every vertex in the elimination tree has at 
most $k$ children which happens when a bag has $k$ vertices all of which are simplicial, 
(they are present one each in each of the sublayers $R_{ij}$, $j \in [k]$ of $R_i$).
% Since distance calculation in a tree is in Logspace, finding a PEO is in Logspace.
Hence we can construct an elimination tree of diameter at most $O(\log{n})$ in Logspace.
\end{proof}

\begin{lemma}[Adapted from~\cite{GW07}]\label{lem:gw1}
Given the tree decomposition of a graph $G$ along with a elimination tree,
the clique-width expression $X$ of $L(G)$ is obtainable in $\Log$. The parse
tree of this clique-width expression has height at most $O(\log{n})$
\end{lemma}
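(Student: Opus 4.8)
The plan is to turn the combinatorial description of $L(G)$ into a clique-width expression that is driven by the elimination tree $T = T(H, o)$, where $H$ is the chordal completion of $G$ with which we started (note that although we add edges to $G$ to get $H$, the line graph we must produce is still $L(G)$, so labels must remember the original adjacency of $G$, not of $H$). Recall from Lemma~\ref{lem:adkk} that $T$ is a rooted tree of height $O(\log n)$ in which each node has at most $k$ children, and the $N^+$-neighbourhood of each vertex induces a clique of size at most $k$. The vertex set of $L(G)$ is $E(G)$, so the natural idea is: process $T$ bottom-up, and when we "eliminate" vertex $v_i$ of $H$, introduce into the expression all edges of $G$ of the form $(v_i, v_j)$ with $j > i$ (each such edge is incident to $v_i$, and $v_i$ is the lower endpoint). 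Since $N^+(H,o,i)$ has size $\le k$, vertex $v_i$ contributes at most $k$ new vertices of $L(G)$ at this step. Two newly-introduced $L(G)$-vertices $(v_i,v_j)$ and $(v_i,v_{j'})$ are adjacent in $L(G)$ (they share $v_i$), and a new vertex $(v_i,v_j)$ is adjacent to an already-built vertex $(v_p,v_i)$ exactly when $p < i$, i.e. when $(v_p,v_i)\in E(G)$ — these are the edges that must be created by $\eta$-operations at this node.

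The key to keeping the label count bounded is the standard observation (this is essentially the content of the cited~\cite{GW07} construction) that when we finish processing the subtree rooted at $v_i$, the only vertices of $L(G)$ already built that can still acquire new neighbours higher up are those of the form $(v_p, v_q)$ with $v_p$ in the subtree, $v_q$ not in the subtree; and such a $v_q$ must lie in $N^+(H,o,p)$, which forces $v_q$ to be one of the $O(k)$ vertices on the root-path that are "active" at $v_i$ — more precisely, an element of the bag associated with $v_i$ in the tree decomposition. So the label of a built $L(G)$-vertex can simply record which active higher vertex it is waiting to be joined to (or a dead label if it is already complete), and there are only $O(k)$ choices plus a constant number of auxiliary labels; thus $f(k) = 2k+2$ labels suffice, matching the earlier proposition. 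The construction at a node $v_i$ with children $u_1,\dots,u_t$ ($t\le k$) is then: take the $\oplus$ of the (relabelled) child expressions and of the $\le k$ singletons $\bullet$ for the new edges at $v_i$; fire the $\eta_{a,b}$ operations to connect the new $L(G)$-vertices to each other and to the built vertices whose "waiting label" is $v_i$; then apply $\rho$-operations to rename the waiting labels of the children's vertices to reflect the bag at $v_i$ (vertices no longer waiting for anyone get the dead label, vertices still waiting get a label naming the appropriate active vertex), and to assign each new $L(G)$-vertex a waiting label for whichever vertex of $N^+$ it still needs. Each node of $T$ is thus expanded into a gadget of $O(k^2)$ clique-width operations, i.e. $O(1)$ operations, so the parse tree has the same height $O(\log n)$ as $T$ up to a constant factor, giving the claimed depth bound.

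For the $\Log$ bound: the chordal completion, the PEO, and the elimination tree $T$ are given to us by Lemma~\ref{lem:adkk}; from these, the set $N^+(H,o,i)$ for each $i$ is computable in $\Log$ by inspecting the bag at $v_i$; the "active vertices" / waiting-label assignment at each node is a local function of the bag contents along the root path, and since bags have size $O(k)$ and the root path has length $O(\log n)$, a pointer-chasing computation in $\Log$ determines, for each built $L(G)$-vertex and each node, what its current label is. Writing out the parse tree is then a matter of emitting, for each node of $T$ in order, the constant-size gadget described above — a $\Log$-transducer can do this since it only needs to hold a constant number of $O(\log n)$-bit indices at a time.

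The main obstacle I expect is bookkeeping rather than depth: getting the label discipline exactly right so that (i) two $L(G)$-vertices are joined by an $\eta$ if and only if the corresponding edges of $G$ actually share an endpoint, with no spurious edges created by two unrelated vertices happening to carry the same waiting label, and (ii) the total number of distinct labels in use at any moment stays within $f(k)$. Point (i) is the subtle one — one must be sure that at the moment an $\eta_{a,b}$ is fired at node $v_i$, the set of built vertices currently carrying label $a$ is exactly the set that should be joined to the set currently carrying label $b$, which requires arguing that a vertex stops carrying a "waiting for $v_i$" label the instant it has been joined to every $L(G)$-vertex incident to $v_i$ that it needs, i.e. immediately after this node is processed. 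This is where the properties of the PEO (the $N^+$ cliques) and the structure of $L$ do the real work, and it is the step I would write out most carefully; the rest is routine.
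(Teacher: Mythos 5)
Your construction is essentially the paper's own (which follows Gurski--Wanke): edges of $G$ are introduced as $L(G)$-vertices at their lower PEO-endpoint, carry a ``waiting'' label identifying their higher endpoint among the $O(k)$ active vertices (the paper realizes this via a proper $(k{+}1)$-coloring rather than bag positions, plus a dead label $k{+}2$), and a constant-size gadget is emitted per node of the $O(\log n)$-height elimination tree by a Logspace transducer. The only cosmetic difference is that you write clique-width operations directly, whereas the paper first builds an NLC-width $(k{+}2)$ expression (Lemma~\ref{lem:gw}) and then converts it to a clique-width $(2k{+}2)$ expression (Proposition~\ref{prop:nlctocw}); the correctness of the label discipline, which you rightly flag as the delicate point, is likewise deferred to~\cite{GW07} in the paper.
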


We show in the subsequent Lemma that the method in~\cite{GW07} 
is amenable to a Logspace implementation when provided with a PEO 
of the vertices of the graph.
\begin{lemma}[Adapted from~\cite{GW07}]\label{lem:gw}
 The NLC-width of the line graph $L(G)$ of a graph $G$ of treewidth $k$ is at most $k+2$ and
 such a NLC-width expression is obtainable in $\Log$.
\end{lemma}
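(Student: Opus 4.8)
The plan is to prove Lemma~\ref{lem:gw} by giving an explicit recursive construction of an $NLC_{k+2}$ expression for $L(G)$ that mirrors the recursive structure of the elimination tree of (the chordal completion of) $G$, and then observe that each step is a local, Logspace-computable transformation. The labels of the $NLC$-expression will encode, for each surviving edge-vertex of $L(G)$, which "port" of the current bag it is attached to --- there are at most $k+1$ vertices in a bag so at most $k+1$ distinct port-labels are needed, plus one extra label for "finished" vertices whose adjacencies in $L(G)$ are already fully determined; hence $k+2$ labels.

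\textbf{Key steps, in order.} First I would fix the elimination tree $T$ and PEO $o = (v_1,\dots,v_n)$ from Lemma~\ref{lem:adkk}, and recall that eliminating $v_i$ exposes the clique $N^+(G,o,i)$; the edges of $G$ incident to $v_i$ that go "upward" become the new vertices of $L(G)$ introduced at this stage. Second, I would process the elimination tree from the leaves upward: at the leaf for $v_i$, introduce a single-vertex $NLC$-graph $\bullet_a$ for each edge $(v_i,v_j)$ with $j>i$, labeling it by the port of $v_j$ in the parent bag. Third, at an internal node I would combine the children's $NLC$-graphs with the $\times_S$ operation, where $S \subseteq [k+2]^2$ is chosen so that two edge-vertices become adjacent in $L(G)$ exactly when the corresponding edges of $G$ share an endpoint --- this is precisely when their port-labels agree, or when one of them is the "just-eliminated" vertex $v_i$ (so all edges incident to $v_i$ pairwise connect). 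Fourth, after handling $v_i$, apply a relabeling $\circ_R$ that sends the label of $v_i$'s port to the "finished" label (those edge-vertices have no further neighbours to acquire) and remaps remaining ports to agree with the numbering in the parent bag. Fifth, I would verify the label count: at any moment the active ports correspond to vertices of a single bag minus already-processed ones, so at most $k+1$, and with the "finished" label this gives $k+2$; then argue the whole expression has parse-tree height $O(\log n)$ because $T$ does (Lemma~\ref{lem:adkk}) and each elimination step adds $O(1)$ depth. Finally, for the Logspace claim I would note that determining, for each tree node, the operands, the set $S$, and the relabeling $R$ requires only reading a bounded-size neighbourhood of the tree decomposition and doing bounded arithmetic, all of which is \FACz\ hence in \Log; composing these across $O(n)$ nodes along a tree of logarithmic depth stays in \Log\ by standard composition.

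\textbf{The main obstacle} I expect is bookkeeping the port-labels consistently across the $\times_S$ operation so that adjacency in $L(G)$ is captured exactly --- in particular handling the fact that an edge $(v_i, v_j)$ of $G$ must be adjacent in $L(G)$ to \emph{every} other edge at $v_i$ and \emph{every} other edge at $v_j$, and these two families are introduced at different stages of the elimination (the $v_i$-family when $v_i$ is eliminated, but the $v_j$-adjacencies must still be pending since $v_j$ survives to a higher bag). Getting the $S$-relations and the timing of the "finished" relabeling right so that no adjacency is missed and none is spuriously added is the delicate part; once the invariant "label = current port, or finished" is stated precisely and maintained, correctness follows by induction on the elimination tree. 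The subsequent Lemma~\ref{lem:gw1} then follows by the standard fact that an $NLC_k$ expression can be converted to a clique-width expression with a bounded blow-up in the number of labels (giving the $2k+2$ bound already quoted as Proposition~\cite{GW07}), and this conversion is again local and Logspace-computable, preserving the $O(\log n)$ height.
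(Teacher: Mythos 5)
Your proposal is correct and follows essentially the same route as the paper: both build the $NLC_{k+2}$ expression bottom-up along the elimination tree of the chordal completion, introduce the edge-vertices of $L(G)$ when their lower endpoint is eliminated, use at most $k+1$ ``active'' labels plus one retired/finished label, create adjacencies by matching labels across subtrees and by making each newly introduced family a clique joined to the retiring label, and derive both the $O(\log n)$ parse-tree height and the \Log\ bound from the locality of each step on the balanced elimination tree of Lemma~\ref{lem:adkk}. The only cosmetic difference is that the paper fixes a global proper $(k+1)$-coloring of $G$ to serve as the labels (so no relabeling between bags is needed beyond retiring $col(v_i)$ to $k+2$), whereas you use per-bag ``ports'' and an extra $\circ_R$ to translate between child and parent bags; this does not change the label count or the complexity.
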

Gurski and Wanke~\cite{GW07} observe that it is sufficient to look at $G$ that are $k$-trees here
because the line graph of every subgraph of $G$ then is an induced subgraph of the line graph of $G$ and the class
$NLC_k$ is closed under taking induced subgraphs for every $k \geq 1$ (See Theorem 4 in~\cite{GW07}).
Our method involves dealing with bounded treewidth graphs that are
chordal, which are a strict superclass of $k$-trees and we observe that 
the property mentioned above still holds in this case.

\begin{proof}{(of Lemma~\ref{lem:gw})}
Given an undirected graph $G = (V,E)$, let $o = (v_1, v_2, \ldots, v_n)$ 
be the PEO of the vertices of $G$. The structure of $G$ can then be
characterized by the PEO tree $T(G,o)$.
% \begin{eqnarray*}
% V_T &=& V \\
% E_T &=& \{(v_i, v_j) \in E | i < j \mbox{\ and\ } \forall j', i < j' < j, (v_i, v_{j'})\notin E \}
% \end{eqnarray*}
Let $col : V_G \to [k+1]$ be a $(k+1)$-coloring of $G$
with $col(v_i) \neq col(v_j)$ for all $(v_i, v_j) \notin E$.
Let $N^-(T(G,o), o, i) = \{v_{j_1}, v_{j_2}, \ldots, v_{j_m}\}$ (defined by the tree
$T(G, o)$) and $N^+(G, o, i) = \{v_{l_1}, \ldots, v_{l_r}\}$ (defined by the graph $G$). 
For $i = 1, \ldots, n$, an NLC-width $(k+2)$ expression is recursively defined as
follows:

\begin{enumerate}
 \item If $m = 1$, then $Y_i = X_{j_1}$. If $m > 1$, then let
 \[
  Y_i = X_{j_1} \times_I \ldots \times_I X_{j_m}
 \]
 where $I = \{(s,s) | s \in [k+1]\}$. The graph $val(Y_i)$ is the disjoint
 union of graphs $val(X_{j_1}), \ldots, val(X_{j_m})$ where vertices with the 
 same label in different graphs are connected by an edge. Note that the relation
 $I$ uses only the labels $1, \ldots, (k+1)$. The label $(k+2)$ is exclusively for
 vertices that will not be connected with other vertices in any further composition step.
 
 \item If $r > 0$, then let $Z_i$ denote a $NLC$ $(k+1)$-width expression that defines a
 complete graph with $r$ vertices labeled by $col(v_{l_1}), \ldots, col(v_{l_r})$. Here
 $r \leq k$ labels are distinct and do not include the color $col(v_i)$ of $v_i$.
 
 \item Now we define 
 \begin{equation*}
  X_i = 
  \begin{cases}
   \circ_R(Y_i \times_S Z_i) & \text{if}\ m > 0\ \text{and}\ r > 0 \\
   Z_i & \text{if}\ m = 0\ \text{and}\ r > 0 \\
   \circ_R(Y_i) & \text{if}\ m > 0\ \text{and}\ r = 0
  \end{cases}
 \end{equation*}
where,
\[
 S = \{(s,s)|s \in [k+1] - col\{(v_i)\}\} \cup \{(col(v_i),s)|s \in [k+1]\}
\]
and 
\begin{equation*}
R(s) =
\begin{cases}
 s & \text{if}\ s \neq col(v_i)\\
 (k+2) & \text{if}\ s = col(v_i)
\end{cases}
\end{equation*}
\end{enumerate}

We refer the reader to \cite{GW07} for a proof of correctness of the observation
the $NLC$-width $(k+2)$ expression $X_n$ defines the line graph of $G$. To see
that the NLC width expression $X_n$ is obtainable in $\Log$, we argue as
follows: 
\begin{enumerate}
 \item We obtain the tree decomposition of the graph $G$ in $\Log$ via \cite{EJT}.
 \item Using Lemm~\ref{lem:adkk}, we can obtain the PEO of $G$ and also construct
 the Elimination tree $T(G,o)$ in $\Log$.
 \item From $T(G,o)$ and $G$, we can obtain $m$ and $r$ and subsequently,
 each element of $N^+(G,o,i)$ and $N^-(T(G,o), o, i)$ in $\Log$.
 \item We can compute the $(k+1)$-coloring of $G$, $col : V_G \to [k+1]$ in $\Log$ via \cite{EJT}
 (Proof of Lemma 4.1).
\end{enumerate}

We build the NLC width expression for the line graph of $G$ from the
elimination tree $T(G,o)$. The NLC width expression $X_i$ is defined for
each vertex of $T(G,o)$. This however depends only on $N^-(T(G,o), o, i)$ and 
$N^+(G,o,i)$ which can be obtained in $\Log$. Along with the fact that
tree traversal via DFS is in $\Log$~\cite{CookMckenzie}, we can
obtain the NLC width expression for the line graph of $G$ in $\Log$: We can
represent the PEO tree using an expression
involving '(' and ')'. Note that such an expression can be output by a Logspace
transducer. This gives the structure of our NLC width expression, and now
we can fill in this expression using the NLC width operations. This only 
involves local computations: for example at a node $v_i$ of the tree, we compute
in Logspace $N^-(T(G,o), o, i), m$ and $N^+(G,o,i), r$ and get the appropriate
expressions based on the values of $m, r$ as given in item 3 of the NLC width
expression above. Since we build the NLC width expression over the balanced 
elimination tree of constant arity and depth $O(\log{n})$ via Lemma~\ref{lem:adkk} and 
every node in the elimination tree had atmost $k$ children, the parse tree of the NLC width
expression is also of height $O(\log{n})$.
\end{proof}

\begin{proposition}\label{prop:nlctocw}
 Given a graph $G$ of NLC-width at most $k$ by an NLC-width expression $Y$,
 we can obtain the clique-width expression $X$ of $G$, where $|X| \leq 2k + 2$ in $\Log$.
\end{proposition}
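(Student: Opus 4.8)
The plan is to give an explicit translation from NLC-width operations to clique-width operations, mirroring the standard simulation but paying attention to label blow-up and to Logspace computability. First I would recall the well-known fact (due to Johansson) that a graph of NLC-width $k$ has clique-width at most $2k$; here the statement claims $2k+2$, which gives us a little slack and lets us use a cleaner construction. The three NLC operations are the single-vertex graph $\bullet_a$, the relabeling $\circ_R$ for $R:[k]\to[k]$, and the product $\times_S$ for $S\subseteq[k]^2$. The single-vertex graph translates directly to $\bullet_a$ in clique-width. For $\circ_R$, a relabeling by an arbitrary function $R$ can be decomposed into a sequence of at most $k$ elementary relabelings $\rho_{a\to b}$ — the only subtlety is that $R$ need not be injective, so one must order the $\rho$-operations so that no label is reused prematurely; the standard trick is to first move every class into a batch of fresh scratch labels and then move them down, which is exactly why a handful of extra labels (the $+2$) are convenient.

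The heart of the construction is simulating $G\times_S H$ with clique-width operations. The point is that $\times_S$ takes the disjoint union of $G$ and $H$ and then, for each pair $(i,j)\in S$, adds all edges between label-$i$ vertices of $G$ and label-$j$ vertices of $H$ — crucially only across the two sides, never within a side. In clique-width, $\oplus$ gives the disjoint union but $\eta_{a,b}$ would add edges within each side as well, so we cannot use the labels directly. The fix is the familiar doubling: before taking $\oplus$, relabel the vertices of $G$ using labels $1,\dots,k$ and the vertices of $H$ using a disjoint copy $1',\dots,k'$ (so $2k$ labels in all, with possibly $k+1$ or $k+2$ as auxiliary). Then for each $(i,j)\in S$ apply $\eta_{i,j'}$: this adds exactly the edges between label-$i$ vertices (all from $G$) and label-$j'$ vertices (all from $H$), and nothing else, because within $G$ all labels are unprimed and within $H$ all are primed. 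After processing all of $S$ (at most $k^2$ many $\eta$-operations, a constant), we relabel the primed labels back down to the unprimed ones via a sequence of $\rho_{i'\to i}$, recovering the labeling $lab'$ specified by the NLC product. One then checks that the resulting labeled graph is exactly $G\times_S H$.

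With the per-operation gadgets in hand, the clique-width expression $X$ is built by structural recursion on $Y$: replace each leaf $\bullet_a$ by itself, each $\circ_R$ node by its $O(k)$-length $\rho$-chain (applied to the already-relabeled child), and each $\times_S$ node by the block ``relabel left child to unprimed, relabel right child to primed, $\oplus$, apply the $\le k^2$ many $\eta$'s, relabel primed back to unprimed.'' Since $k$ is a constant, each NLC node expands to a constant-size block of clique-width operations, so $|X|=O(|Y|)$ and, more importantly, the number of distinct labels used is at most $2k+2$. The Logspace bound is then routine: the expansion is a local, constant-time rewriting of each node of the parse tree of $Y$, and traversing a tree while emitting bracketed output is in Logspace (as already invoked via \cite{CookMckenzie} earlier in the paper); a Logspace transducer reads $Y$, and at each node writes out the corresponding fixed block, threading through the (constant-size) information about which labels are currently ``active.'' The main obstacle — and the only place care is genuinely needed — is the bookkeeping in the $\times_S$ simulation: one must verify that the priming/unpriming discipline never causes an $\eta$ to add a spurious edge inside $G$ or inside $H$, and that the final relabeling reproduces $lab'$ on the nose rather than up to a permutation of labels; this is a finite case check but is the substantive content of the proposition.
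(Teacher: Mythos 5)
Your proof is correct, but it is not the argument the paper gives. You prove the proposition in its literal generality via the standard Johansson-style simulation: leave the left operand's labels in $[k]$, move the right operand's labels to a primed copy, take $\oplus$, realize $\times_S$ by $\eta_{i,j'}$ for each $(i,j)\in S$ (which cannot create intra-side edges because no vertex of one side ever shares a label with a vertex of the other), and then unprime; this uses $2k$ working labels and yields a clean local rewriting of the parse tree, hence Logspace. The paper instead proves a statement tailored to its pipeline: it converts the \emph{specific} NLC-width-$(k+2)$ expression $X_i$ built in Lemma~\ref{lem:gw} into a clique-width-$(2k+2)$ expression by induction on $i$, exploiting structural slack in that expression --- each $val(X_{j_t})$ omits the label $col(v_{j_t})$ and label $k+2$ never participates in edge creation --- so that only $k$ auxiliary labels are needed rather than a full primed copy of all $k+2$. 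The trade-off: your generic argument proves the proposition exactly as stated (NLC-width $k$ gives clique-width $\le 2k \le 2k+2$) and is arguably cleaner, but applied to the expression the paper actually feeds it (NLC-width $k+2$) it gives $2(k+2)=2k+4$ labels rather than the $2k+2$ claimed in the preprocessing summary; since all of these are constants, nothing downstream breaks, but the constants do not match the paper's. One small imprecision on your side: two spare labels do not in general suffice to decompose an arbitrary non-injective $R:[k]\to[k]$ into elementary $\rho$-relabelings via the ``move everything to scratch labels first'' trick --- that trick needs $k$ scratch labels --- but this is harmless in your construction because the primed copy already supplies them.
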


\begin{proof}
For the NLC width-$(k+2)$ expression $X_i$ defined above, there is an equivalent
clique width $(2k+2)$ expression $X_i'$. We prove by induction on $i$:
For $i=1$, there is a clique width-$(k+1)$ expression $X_1'$ because
$val(X_1)$ is just a graph on at most $k$ vertices with labels from the 
set $[k+1]$. For $i > 1$, an equivalent clique width expression $Y_i'$
for $Y_i = X_{j_1} \times_I \ldots \times_I X_{j_m}$ is obtained from the
clique width expressions for $X_{j_1}', \ldots X_{j_m}'$ and $k$ auxiliary labels.
This is because for $t = 1,\ldots, m$, the vertices of every $val(X_{j_t})$
are labeled by $k+1$ labels from $[k+2]$. Label $col(u_{j_t}) \in [k+1]$ is not 
used by the vertices of $val(X_{j_t}')$ and label $k+2$ is not involved in any
edge creation. The clique width expression $X_i'$ for $X_i = \circ_R(Y_i \times_S Z_i)$
can finally be defined by clique width expression for $Y_i$ and $k$ auxiliary labels
because $val(Z_i)$ has at most $k$ vertices.
Since all these changes are local, we can the convert the NLC width $k+2$
expression to a clique width $2k+2$ expression by replacing the corresponding
subexpressions for NLC width by the ones for clique width, to obtain the line graph of 
a bounded treewidth graph of treewidth at most $k$ in Logspace.
\end{proof}

To sum up, these are the main preprocessing steps:

\begin{enumerate}
 \item Obtain a balanced binary tree decomposition of the input 
 treewidth $k$ graph $G$ in Logspace via Proposition~\ref{prop:ejt}~\cite{EJT}.
 \item Obtain the tree decomposition of $G'$ (as required by Proposition~\ref{prop:hnw}
 and specified by Lemma~\ref{lem:EHbij}) from the tree decomposition of $G$.
 \item Perform a chordal completion of $G'$ by adding edges to every bag. 
 \item Obtain a PEO tree of $G'$ of height $O(\log{n})$, where every vertex has at most $k$
 children via Lemma~\ref{lem:adkk}.
 \item Construct a NLC width $(k+2)$ expression for $L(G')$ via Lemma~\ref{lem:gw}
 \item From the NLC width $(k+2)$ expression, construct a clique-width
 $(2k+2)$ expression for $L(G')$ via Proposition~\ref{prop:nlctocw} (The surplus edges
 added during the chordal completion are removed at this step).
\end{enumerate}

\section{The \texorpdfstring{$\#\mathsf{SAC}^1$}{} upper bound}\label{sec:algo}
Let $X$ be the clique-width $k$ expression for a labeled graph $G = (V,E, lab)$ such
that $G$ is $\val{X}$ and let $|V| = n$. Let $G$ be of clique-width $k$.
Hence by Definition~\ref{def:cw}, $G$ can be constructed from the graph with $n$
isolated labeled vertices, using at most $k$ labels. Notice that $X$ can be viewed 
as a tree (we will refer to this as the parse tree of the clique-width expression)
with the $n$ isolated labeled vertices at the leaves and every internal 
node is labeled with one of the operations
$o = \{\bullet_i, \oplus, \eta_{i,j}, \rho_{i \to j}: i,j \in [k] \wedge i \neq j\}$
To each internal vertex of the tree, we can associate a graph
(possibly disconnected) which is a subgraph of $G$, and at the root of the tree, we get
$G$ itself. The size of the tree is polynomial in $n$ and $k$. Our objective in this section
will be to count the number of Hamiltonian cycles in $G$, when provided with the 
clique-width expression $X$. We will count along the parse tree of the clique-width expression.

To this end, we call a subset of edges $E'\subseteq E$ {\em{path-cycle covers}}, if in the subgraph 
$G'=(V,E',lab)$ every vertex in $G'$ has degree at most $2$. To every such $G'$, we
associate a multiset $M$ consisting of multisets $\langle lab(v_1),lab(v_r)\rangle$
one each for every path/cycle $p=v_1,\ldots,v_r$, $r\geq 1$, in $G'$, where $v_1,v_r$
have degree at most $1$ in $G'$ if they exist ($p$ being a cycle otherwise). Let $F(X)$ be the set
of all multisets $M$ for all such subsets $E'\subseteq E$. 

Let $K$ be the set of all possible labels of the end points, in the labeled graph produced at the output
of each node in the parse tree. We refer to elements of $K$ as \emph{types}. Note that every $M$ consists of at
most $|K|$ distinct \emph{types} and $F(X)$ has at most ${(n + 1)}^{|K|}$ distinct multisets each with at 
most $n$ multisets of size $2$. 
% Let $K$ be the set of all possible labels of the
% end points, in the labeled graph produced at the output of each node in the parse tree.
% The size $K$ is $O(k^2)$. 
Here  $K = K_0 \uplus K_1 \uplus K_2$ is the set of distinct
\emph{types} where $K_2$ accounts for types of the
form $\langle i,j\rangle$ (for $i \neq j$) corresponds to paths whose end points are
$i$ and $j$; $K_0$ for the empty type $\langle\rangle = \emptyset$ corresponds to a \emph{cycle}; 
$K_1$ for types of the form $\langle i, i \rangle$ which could be either
paths whose end points are both labeled $i$, or isolated vertices with the label
$i$. Observe that, $|K_2| = \binom{k}{2}$, $K_1 = 2k$ and $K_0 = 1$, where we
distinguish between the cases of single isolated vertex of label $i$ and multiple
vertex paths with end points labeled $i$ for technical reasons, leading to the extra
factor of $2$. Our notation is consistent with~\cite{EGW01} in all cases except for
the empty type, since in~\cite{EGW01} cycles are not permitted.

Our objective is to count the number of path-cycle covers, $\#X[M]$,
corresponding to a multiset $M$ in the graph $val(X)$. In particular,
\[
\sum_{i,j \in [K]}{\#X[M_{i,j}]}
\]
 where  $M_{i,j} = \langle\langle i,j\rangle\rangle$ is a multiset
containing a single type $\langle i,j \rangle$, 
yields the number of Hamiltonian paths with end points coloured $i,j$ in $val(X)$.
We denote by $\#X$ the vector indexed by $M$ and hence has ${(n + 1)}^{K}$ entries
where $\#X[M]$ (where $M \in [0, n]^{K}$) stores the count of the number of path/cycle
covers of type specified by $M$ in the graph $val(X)$.
Let $C_o$ be a ${(n + 1)}^{K} \times {(n + 1)}^{K}$ matrix which for each pair of 
multisets $M,M'$ denotes the number of ways to form $M'$ from $M$ under an operation 
$o \in \{\eta_{i,j}, \rho_{i \to j}: i,j \in [k] \wedge i \neq j\}$.
$C_o$ is defined uniquely for the two kinds of operations $\eta,\rho$ and 
is independent of the input graph $\val{X}$. 

Then the following is an easy consequence of the definitions:
\begin{proposition}\label{prop:sharpX}
 The value of $\#X$ is given by:
\begin{enumerate}
\item if $X = \bullet_i$ then if  $M = \langle\langle i,i \rangle\rangle$ then $\#X[M] = 1$; else $\#X[M] = 0$.
\item else if $X = X_1 \oplus X_2$ then 
\[
\#X[M] = \sum_{M' \in [0,n]^{K}: M' \subseteq M}{\#X_1[M']  \#X_2[M \setminus M']}
\]
\item else if $X = \rho_{i\to j}(X_1)$ then $(C_{\rho_{i\to j}})^T \#X_1$
\item else $X = \eta_{i,j}(X_1)$ then $(C_{\eta_{i,j}})^T \#X_1$
\end{enumerate}
\end{proposition}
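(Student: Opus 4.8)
The plan is to verify each of the four cases of Proposition~\ref{prop:sharpX} directly against the definitions of path-cycle covers, of the multiset type $M$ associated to a subgraph, and of the clique-width operations $\bullet_i,\oplus,\rho_{i\to j},\eta_{i,j}$. Since each case concerns the subgraph $\val{X}$ freshly assembled at a node of the parse tree from the subgraphs at its child (or children), the argument is essentially a bookkeeping exercise: I would check that the set $F(X)$ of attainable type-multisets, together with the count $\#X[M]$ of path-cycle covers realizing each $M$, transforms in the claimed way under each operation.

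For the base case $X=\bullet_i$, the graph is a single labeled vertex, which has exactly one path-cycle cover, namely $E'=\emptyset$; that cover is a single trivial path (an isolated vertex) with label $i$ at both ends, so its type multiset is $\langle\langle i,i\rangle\rangle$ and every other $M$ has count $0$. For the disjoint-union case $X=X_1\oplus X_2$, since $\val{X}$ shares no vertices or edges across the two parts, a subset $E'\subseteq E$ with all degrees $\le 2$ is exactly a disjoint pair $(E'_1,E'_2)$ of such subsets on the two parts, and the type multiset of $E'$ is the multiset union of those of $E'_1,E'_2$; summing over the way to split $M$ as $M'\uplus(M\setminus M')$ (with $M'$ ranging over sub-multisets realizable on the $X_1$ side and $M\setminus M'$ on the $X_2$ side) gives the stated convolution. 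The one subtlety here is that the vertex-set $[0,n]^K$ index really should be read as multisets with total multiplicity at most $n$, and $M'\subseteq M$ means sub-multiset; I would make that reading explicit so the sum is well-defined and finite.

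For the relabel case $X=\rho_{i\to j}(X_1)$, the underlying graph — hence the set of path-cycle covers — is unchanged; only the labels of the endpoints change, so each cover of type $M$ in $\val{X_1}$ becomes a cover of some type $M'$ in $\val{X}$, where $M'$ is obtained by replacing every occurrence of label $i$ by $j$ inside each size-$\le 2$ multiset of $M$ (with the attendant merging of types $\langle i,i\rangle\mapsto\langle j,j\rangle$, $\langle i,\ell\rangle\mapsto\langle j,\ell\rangle$, etc., and noting $\langle i,j\rangle\mapsto\langle j,j\rangle$ moves a type from $K_2$ into $K_1$). This is precisely what the matrix $C_{\rho_{i\to j}}$ records — $(C_{\rho_{i\to j}})_{M,M'}$ is the number of ways $M$ maps to $M'$, which here is $1$ if $M'$ is the image of $M$ and $0$ otherwise — so $\#X = (C_{\rho_{i\to j}})^T\#X_1$. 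For the edge-creation case $X=\eta_{i,j}(X_1)$, the vertex set is unchanged but every non-edge between an $i$-labeled and a $j$-labeled vertex is added; a path-cycle cover $E'$ of $\val{X}$ restricts to a path-cycle cover $E'\cap E_{X_1}$ of $\val{X_1}$, and conversely a cover of $\val{X_1}$ can be extended by adding a subset of the new $i$--$j$ edges, provided the result still has all degrees $\le 2$. The combinatorics of which extensions are legal, and how the type multiset changes when the chosen new edges splice paths together into longer paths or into cycles (or close a single path into a cycle, producing the empty type in $K_0$), is exactly the content of the matrix $C_{\eta_{i,j}}$, which is defined to count, for each $M,M'$, the number of ways to produce $M'$ from $M$ by adding $i$--$j$ edges; hence $\#X = (C_{\eta_{i,j}})^T\#X_1$.

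I expect the only real obstacle to be the $\eta_{i,j}$ case: one must argue that the effect of adding new $i$--$j$ edges on a subgraph depends \emph{only} on the type multiset $M$ of that subgraph and not on any finer structural information — that is, that $C_{\eta_{i,j}}$ is well-defined. This follows because splicing decisions are determined solely by the endpoint labels of the current paths/cycles (an $i$--$j$ edge can only attach at an $i$-endpoint and a $j$-endpoint, and a vertex of current degree $2$ cannot receive a new edge, so interior vertices of paths/cycles are irrelevant), together with the observation that two distinct paths with endpoint types $\langle i,\ast\rangle$ and $\langle j,\ast\rangle$ can always be legally joined and a single path with one $i$-end and one $j$-end can be closed into a cycle; the count of such joinings is a function of how many endpoints of each label type are present, i.e.\ of $M$ alone. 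Spelling out this independence carefully — and in particular handling the $K_1$ versus $K_0$ distinction for isolated vertices and the formation of the empty type — is the part that needs care; everything else is immediate from the definitions. Since the proposition is asserted to be ``an easy consequence of the definitions,'' I would keep the write-up correspondingly brief, citing the definitions of $C_{\eta_{i,j}}$ and $C_{\rho_{i\to j}}$ as the formal repository of these case analyses.
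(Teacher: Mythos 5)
Your proposal is correct and follows essentially the same route as the paper's own (very terse) proof: direct case analysis from the definitions, with the matrices $C_{\rho_{i\to j}}$ and $C_{\eta_{i,j}}$ serving by definition as the transition counts between type multisets. The one point you develop beyond the paper's argument --- that $C_{\eta_{i,j}}$ is well-defined because legal splicings depend only on endpoint labels and degrees, hence only on $M$ --- is a genuine subtlety the paper defers to its later computation of $W_{\vec{\alpha}}(t')$, and making it explicit here is a reasonable strengthening rather than a deviation.
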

\begin{proof}
The first item is immediate. For the second, notice that each multiset of types $M$ in the
disjoint union of two graphs is formed by picking multisets $M',M''$ from the two graphs 
respectively and taking their multi-union. Thus the number of distinct ways to form $M$
is obtained by considering all possible decompositions of $M$ into sets $M',M''$ one from each graph. Since,
this is a decomposition $M'' = M \setminus M'$, the correctness of the second item follows.

For the third and the fourth items, notice that we have a matrix $C$ such that $C[M,M']$
is the number of ways to convert a multiset $M$ to a multiset $M'$. Thus the number of ways 
to form $M'$ is to take the product of $\#X[M]C[M,M']$ and add up the products over all $M$.
This is the stated form in matrix notation. 
\end{proof}
Proposition~\ref{prop:sharpX} enables us to prove the $\ShSACo\ $ upper bound:
\begin{lemma}\label{lem:main}
For a bounded clique-width expression $X$, for every multiset of types, $M$,
the value $\#X[M]$ of the number of path-cycle covers at any node along
the parse tree of the clique-width expression can be computed in \ShSACz\ 
where the inputs to the \ShSACz\ circuit are entries of the matrix $C_{o}$ for
$o \in \{\bullet_i, \oplus, \eta_{i,j}, \rho_{i \to j}: i,j \in [k] \wedge i \neq j\}$.
The number of path-cycle covers in the input graph can hence be counted in \ShSACo.
\end{lemma}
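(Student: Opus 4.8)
The plan is to analyze the recursion in Proposition~\ref{prop:sharpX} and show that unwinding it along the parse tree yields an arithmetic circuit of the right shape. First I would observe that the parse tree of $X$ has size polynomial in $n$ and $k$, and by Lemma~\ref{lem:gw1} (and the preprocessing pipeline) we may assume it has height $O(\log n)$; each internal node is one of $\oplus$, $\rho_{i\to j}$, $\eta_{i,j}$ and each leaf is some $\bullet_i$. I would build a circuit computing $\#X$ by induction on this tree, where each node holds the vector $\#X$ indexed by multisets $M\in[0,n]^{K}$ (a vector of polynomially many entries, since $|K|$ is constant when $k$ is bounded). The leaf case is trivial ($0/1$ constants). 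For a $\rho$- or $\eta$-node, $\#X$ is obtained by multiplying the (transpose of the) matrix $C_o$ by $\#X_1$; since the matrix has polynomial dimensions, each output entry is a sum over polynomially many products $C_o[M,M']\cdot\#X_1[M']$, i.e.\ one layer of $+$ over fan-in-$2$ $\times$ gates, which is exactly an $\ShSACz$ step taking the $C_o$ entries as inputs.

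The $\oplus$-node is the one that needs care: $\#X[M]=\sum_{M'\subseteq M}\#X_1[M']\,\#X_2[M\setminus M']$. The number of submultisets $M'$ of a given $M$ is at most $(n+1)^{|K|}$, which is polynomial for fixed $|K|$, so this is again a single $+$-layer over fan-in-$2$ products $\#X_1[M']\cdot\#X_2[M\setminus M']$ — one $\ShSACz$ layer feeding from the two child vectors. Composing one such constant-depth block per tree node along the $O(\log n)$-height parse tree gives a semi-unbounded circuit of depth $O(\log n)$. I would note the inputs are the constants $0,1$ together with the entries of the matrices $C_o$; since those entries are themselves fixed nonnegative integers (counts of combinatorial reconfigurations, bounded by something like $n^{O(1)}$) depending only on $n$ and $k$ and not on the input graph, they can be supplied as constants, and at the cost of expanding each by its binary-like sum-of-ones representation (or just treating ``constant $c$'' as $1+1+\dots+1$, still polynomial size), the circuit uses only $0,1$ and $+,\times$ as required by Definition~\ref{def:shsaco}.

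The main obstacle, and the one I expect to spend the most effort on, is the \emph{degree} bound needed to land in $\ShSACo$ rather than merely in an $O(\log n)$-depth semi-unbounded circuit with no degree control (the latter is still fine by Proposition~\ref{prop:ajmv} only if the degree is polynomial). The naive concern is that multiplying by a matrix whose entries are themselves ``constants of large degree'' inflates the formal degree multiplicatively over $O(\log n)$ levels, potentially giving degree $n^{O(\log n)}$. The resolution is the argument flagged in the introduction: the final output $\sum_{i,j}\#X[M_{i,j}]$ counts a quantity — path-cycle covers, ultimately Hamiltonian cycles — that is at most $2^{O(n\log n)}$, and because the circuit is \emph{monotone} (all gates $+$ and $\times$, all constants nonnegative), no intermediate value can exceed the final value along any parse-tree branch once we fix the relevant combinatorial interpretation; more carefully, one shows each $\#X[M]$ at a node counts path-cycle covers of a subgraph on at most $n$ vertices, hence is at most $2^{O(n\log n)}$, so the formal degree is $O(n\log n)$, i.e.\ polynomial. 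Then Proposition~\ref{prop:ajmv} converts the polynomial-size, polynomial-degree arithmetic circuit into an $\ShSACo$ circuit. I would close by combining this with Lemma~\ref{lem:EHbij} and the preprocessing of Section~\ref{sec:preproc}: the clique-width expression for $L(G')$ is produced in $\Log$ (hence in $\ShSACo$'s uniformity class), so \#Euler Tours for bounded treewidth graphs reduces to this count and is in $\ShSACo$.
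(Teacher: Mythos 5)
Your proposal follows essentially the same route as the paper: build one \ShSACz\ block per parse-tree node implementing the recurrence of Proposition~\ref{prop:sharpX} (with the $C_o$ entries fed in from Proposition~\ref{prop:Crho} and Lemma~\ref{lem:etaMatrixInL}), observe that the resulting circuit is monotone and computes values with $\poly(n)$ many bits so its formal degree is polynomial, and invoke Proposition~\ref{prop:ajmv} to get \ShSACo. Two caveats. First, your assumption that the parse tree has height $O(\log n)$ is neither available for the lemma as stated (it quantifies over an arbitrary bounded clique-width expression $X$, not only those produced by the Section~\ref{sec:preproc} pipeline) nor needed: the paper explicitly allows a $\poly(n)$-depth parse tree and relies entirely on the depth reduction of Proposition~\ref{prop:ajmv}, which your argument also invokes, so this step is harmless but redundant. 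Second, your parenthetical claim that the entries of $C_o$ are ``bounded by something like $n^{O(1)}$'' is false: by Lemma~\ref{lem:W} the entries of $C_{\eta_{i,j}}$ involve factorials such as $\beta(i_0)!\,\beta(j_0)!$ and can be as large as $2^{\Theta(n\log n)}$, so expanding each as $1+1+\cdots+1$ would blow the circuit up to exponential size. This does not sink the proof --- the lemma is phrased precisely so that these entries are \emph{inputs} to the \ShSACz\ blocks, supplied by the Logspace uniformity machine (and if one insists on constants $0,1$ only, each entry can instead be generated by a small arithmetic circuit of polynomial size and degree) --- but the sum-of-ones justification as written would fail.
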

% \begin{lemma}\label{lem:gapL}
%   Given a clique-width expression $X$, which is the line graph $L(G)$
%   of a bounded treewidth graph $G$ with a balanced binary tree decomposition,
%   the number of Hamiltonian cycles in $L(G)$ can be computed in $\Gl$.
% \end{lemma}
\begin{proof}{(of Lemma~\ref{lem:main})}
We will use Proposition~\ref{prop:sharpX} for every node in the clique-width decomposition
$X$ to compute $\#X[M]$ for every $M \in [0,n]^K$. For this we will need the various matrices
$C_{\rho_{i\to j}}$ (constructed in Proposition~\ref{prop:Crho})
and $C_{\eta_{i,j}}$ (constructed in Lemma~\ref{lem:etaMatrixInL}). The correctness of this
circuit is clear from Proposition~\ref{prop:sharpX}.

Next, we need to argue that the number of path-cycle covers in a bounded clique-width graph
is a function in $\ShSACo$. We construct our \ShSACo\ circuit by combining the \ShSACz\ circuits
for every node in the parse tree as given by Proposition~\ref{prop:sharpX}. Notice that the 
resulting circuit is monotone (there are no subtractions) and the value at every gate is
at most a polynomial (in $n$) many bits -- this is because the number of path-cycle covers of
an $n$-vertex simple graph is at most an exponential function of $n$, which is representable by
$\poly(n)$ many bits. Thus the degree
\footnote{The degree of a circuit is defined inductively as follows: All input variables (here these
correspond to the vertices/ edges in the graph) and constants have degree $1$. The degree of a
$\times$-gate is the sum of the degree of its children. The degree of a $+$-gate is the maximum
of the degrees of its children.} of our circuit must also be polynomial in $n$. The circuit obtained thus
is of arbitrary ($\poly(n)$) depth (since the parse tree is not necessarily balanced), and hence
a naive evaluation of such a circuit is in $\Pt$. However, by Proposition~\ref{prop:ajmv} this 
circuit can be depth-reduced to yield an upper bound of 
$\ShSACo\ \subseteq \NCt \subseteq \Pt \cap \mathsf{DSPACE}(\log^2{n})$.
% \samir{This must be argued carefully}
\end{proof}

We now turn to the proof of our main Theorem~\ref{thm:main}

\begin{proof}(of Theorem~\ref{thm:main})
To count Euler tours on bounded treewidth graphs, we can count Hamiltonian cycles in
the line graph (via Lemma~\ref{lem:EHbij}). Here
we need to compute the quantity $\#X[\langle\emptyset\rangle]$ (since the empty multiset
represents a cycle, the path-cycle cover consisting of a single cycle must be a Hamiltonian
cycle itself). This follows from Lemma~\ref{lem:main}.
\end{proof}

\begin{proof}(of Theorem~\ref{thm:bonus}) Hamiltonian cycles 
can be counted in \ShSACo\ by Lemma~\ref{lem:main}. Longest Cycles (Paths) can be counted by considering 
multisets which consist of a single cycle (respectively, path) and the minimum number of isolated vertices respectively.
To see this observe that for every cycle (respectively, path) $C$ in the graph there is a multiset consisting
of a single empty type (respectively, non-empty type) and $|V(G)| - |V(C)|$ isolated vertices respectively. 

Counting cycle covers is equally simple. We just need to add up the counts for multisets
consisting only of empty types. This, is of course because an empty type represents a 
cycle.

Perfect Matchings in bipartite graphs can therefore be counted by counting the cycle 
covers in a biadjacency matrix.
\end{proof}

\subsection{Computing \texorpdfstring{$C_{\rho_{i\to j}}$}{} and  \texorpdfstring{$C_{\eta_{i,j}}$}{}}
It is easy to compute $C_{\rho_{i \to j}}$ by the following, 
\begin{proposition}\label{prop:Crho} 
$C(\rho_{i \to j})$ is a $\{0,1\}$-matrix such that the entry corresponding 
to $M_1,M_2$ is equal to $1$ iff $\rho_{i \to j}(M_1)=M_2$ (it is $0$ otherwise).
\end{proposition}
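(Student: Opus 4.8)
The plan is to exploit the fact that $\rho_{i\to j}$ touches only the labelling function, leaving the vertex set and edge set of $\val{X_1}$ completely untouched; this is precisely what forces the ``number of ways'' count to be $0$ or $1$, i.e. makes $C_{\rho_{i\to j}}$ a $\{0,1\}$-matrix. So I would begin by fixing an arbitrary path-cycle cover $E'\subseteq E$ of $\val{X_1}$ and observing that, since $\rho_{i\to j}$ changes neither $V$ nor $E$, the same edge set $E'$ is a path-cycle cover of $\val{\rho_{i\to j}(X_1)}$: the only difference is that the labelling map is post-composed with the relabelling $R$ defined by $R(i)=j$ and $R(s)=s$ for $s\neq i$. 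In particular, a path $p = v_1,\dots,v_r$ of $G'=(V,E',lab)$ with endpoint type $\langle lab(v_1),lab(v_r)\rangle$ becomes, as a path in $(V,E',R\circ lab)$, a path with endpoint type $\langle R(lab(v_1)),R(lab(v_r))\rangle$, while every cycle of $G'$ stays a cycle and keeps contributing the empty type $\langle\rangle$ in both graphs.

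Next I would package this observation as a single \emph{single-valued} map $\rho_{i\to j}\colon [0,n]^K\to[0,n]^K$ on multisets of types: replace each type $\langle a,b\rangle$ occurring in $M$ by $\langle R(a),R(b)\rangle$, and leave the empty type fixed. The one point that deserves care --- and it is really the only subtlety in the whole argument --- is the bookkeeping forced by the partition $K = K_0\uplus K_1\uplus K_2$. A lone vertex carrying label $i$ must go to a lone vertex carrying label $j$, staying inside the ``isolated vertex'' part of $K_1$; a genuine multi-vertex path with endpoint multiset $\langle i,i\rangle$, $\langle i,j\rangle$, or $\langle i,s\rangle$ (for $s\neq i,j$) must go to the multi-vertex-path variant of $\langle j,j\rangle$, $\langle j,j\rangle$, or $\langle j,s\rangle$ respectively. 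Each of these assignments is fixed outright by $R$ together with the one-vertex-or-more flag that is already recorded in the type, so the induced map on $M$ is well defined, with no choices involved, and it is trivially computable (in Logspace) since it is just a relabelling of a multiset.

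Finally I would conclude: the type-multiset associated with $E'$ in $\val{\rho_{i\to j}(X_1)}$ is exactly $\rho_{i\to j}(M)$, where $M$ is the type-multiset of $E'$ in $\val{X_1}$; hence the number of ways to ``form $M_2$ from $M_1$'' under the operation $\rho_{i\to j}$ is $1$ when $\rho_{i\to j}(M_1)=M_2$ and $0$ otherwise, which is precisely the description of $C_{\rho_{i\to j}}$ claimed in the statement. As a by-product this also verifies item~3 of Proposition~\ref{prop:sharpX}, since $\#X[M_2]=\sum_{M_1:\,\rho_{i\to j}(M_1)=M_2}\#X_1[M_1]$ is exactly the matrix-vector product $(C_{\rho_{i\to j}})^{T}\#X_1$. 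I do not anticipate any genuine obstacle here beyond the $K_0/K_1/K_2$ bookkeeping just described; this case is markedly easier than $C_{\eta_{i,j}}$, where adding edges can merge paths into paths or cycles and so genuinely produces several preimages, whereas relabelling is intrinsically deterministic.
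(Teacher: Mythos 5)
Your argument is correct, and it is exactly the reasoning the paper has in mind: the paper offers no proof at all for this proposition (it is introduced with ``It is easy to compute\dots'' and treated as immediate from the definitions), and your writeup simply supplies the omitted details --- that $\rho_{i\to j}$ alters only the labelling, so it induces a deterministic map on type multisets and the ``number of ways'' entry is forced to be $0$ or $1$. Your attention to the $K_0\uplus K_1\uplus K_2$ bookkeeping (isolated vertices versus multi-vertex paths with equal endpoint labels) is the one point genuinely worth spelling out, and you handle it correctly.
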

Let $W_{\vec{\alpha}}(t')$ denote the number of ways to form one path/cycle of type 
$t' \in K$, given a multiset of paths/cycles consisting of $\vec{\alpha}(t)$ 
paths/cycles for every type $t \in K$.

Next, we show how to compute $C_{\eta_{i,j}}$:
\begin{lemma}\label{lem:etaMatrixInL}
 There is a Logspace Turing machine that takes input 
$W_{\vec{\alpha}}(t')$ for every $\vec{\alpha} \in [0, n]^{|K|}, t' \in K$
and outputs the entries of the matrix $C_{\eta_{i,j}}$. 
\end{lemma}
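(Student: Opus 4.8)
The plan is to read $C_{\eta_{i,j}}[M,M']$ off a purely combinatorial description of how the edge-creation operation $\eta_{i,j}$ acts on path-cycle covers, and then to argue that the resulting counting expression can be evaluated in \Log\ once all the numbers $W_{\vec\alpha}(\cdot)$ are supplied to the machine.

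\textbf{Reduction to relevant types.} The operation $\eta_{i,j}$ adds only edges with one endpoint labelled $i$ and the other labelled $j$, so it leaves untouched every path or cycle of the input cover none of whose endpoints carries a label in $\{i,j\}$. Splitting $M = M_{\mathrm{rel}} \uplus M_{\mathrm{irr}}$ (and $M'$ likewise into $M'_{\mathrm{rel}} \uplus M'_{\mathrm{irr}}$) according to whether a type involves a label of $\{i,j\}$, we get $C_{\eta_{i,j}}[M,M'] = 0$ unless $M_{\mathrm{irr}} = M'_{\mathrm{irr}}$, and otherwise the entry depends only on $M_{\mathrm{rel}}, M'_{\mathrm{rel}}$. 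Since the number of relevant types is bounded by a function of the clique-width $k$, hence constant, a relevant type-multiset is a vector with constantly many coordinates, each in $[0,n]$.

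\textbf{Cluster decomposition.} A set of new $\eta_{i,j}$-edges taking a cover of relevant type $M_{\mathrm{rel}}$ to one of type $M'_{\mathrm{rel}}$ is the same data as: (i) a partition of the relevant paths/cycles of the first cover into groups (``clusters''), one per path/cycle of the second cover; and (ii) for each cluster, a choice of $\eta_{i,j}$-edges joining exactly its members into a single path/cycle of the prescribed target type. By the definition of $W$, the number of choices in (ii) for a cluster of member-profile $\vec\alpha$ and target type $t'$ is exactly $W_{\vec\alpha}(t')$, one of the inputs. Hence
\[
C_{\eta_{i,j}}[M,M'] \;=\; \sum\;\bigl(\text{\# realisations of the partition}\bigr)\;\prod_{\mathrm{clusters}} W_{\vec\alpha_{\mathrm{cl}}}\bigl(t'_{\mathrm{cl}}\bigr),
\]
the sum ranging over all target-type-consistent partitions, the parenthesised factor being the multinomial coefficient that distributes the concrete relevant paths/cycles of $M$ among the clusters of each target type.

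\textbf{Making the sum Logspace-computable (the crux).} The sum above has a priori exponentially many terms, so I would package the $W$-values into the exponential generating functions $g_{t'}(\vec z) = \sum_{\vec\alpha} \bigl(W_{\vec\alpha}(t')/\vec\alpha!\bigr)\vec z^{\,\vec\alpha}$ in the constantly many variables $\vec z=(z_t)_{t\text{ relevant}}$; then, up to a fixed product-of-factorials normalisation, $C_{\eta_{i,j}}[M_{\mathrm{rel}},M'_{\mathrm{rel}}]$ equals a single coefficient of $\prod_{t'} g_{t'}(\vec z)^{\,M'_{\mathrm{rel}}(t')}$ --- the exponential-formula identity for colour-constrained set partitions. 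Because only constantly many variables occur, Kronecker substitution (Fact~\ref{fact:kron}) turns this into extracting one coefficient from a product of at most $n$ univariate polynomials of polynomial degree, which a \Log\ transducer can do by evaluating each factor at polynomially many integer points (each evaluation being an iterated product of $n$ integers, hence in \Log), multiplying pointwise, and interpolating with the same machinery used elsewhere in the paper. Re-attaching the identity action on $M_{\mathrm{irr}}$ then yields every entry of $C_{\eta_{i,j}}$ in \Log. I expect the real difficulty to be the bookkeeping here: getting the factorial normalisations exactly right and correctly handling the degenerate ``isolated vertex'' versus ``path'' types (the reason $|K_1|=2k$ rather than $k$), both of which must be treated with the care of the type calculus in~\cite{EGW01}, and confirming that each ingredient --- iterated integer multiplication and polynomial interpolation --- remains within Logspace.
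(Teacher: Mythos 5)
Your proposal is correct and, at its core, uses the same machinery as the paper: write $C_{\eta_{i,j}}[M,M']$ as a single coefficient of a polynomial in constantly many variables whose coefficients are assembled from the input values $W_{\vec\alpha}(t')$, then extract that coefficient by Kronecker substitution followed by Lagrange interpolation, both of which sit inside $\mathsf{DLOGTIME}$-uniform \TCz\ $\subseteq$ \Log\ via~\cite{HAB}. Where you genuinely diverge is in the combinatorial identity that feeds the interpolation. The paper (Lemma~\ref{lem:coeffPoly}) parametrises the transition directly by the distribution vectors $\vec d_{\vec\alpha}(t')$ recording how many output paths of type $t'$ consume input resources $\vec\alpha$, writes the entry as $\sum_{\vec d}\prod_{t',\vec\alpha} W_{\vec\alpha}(t')^{d_{\vec\alpha}(t')}$ subject to the two conservation constraints, and uses the formal variables $x_{t'},y_t$ only to encode those constraints as exponent conditions. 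You instead invoke the exponential formula: one generating function $g_{t'}(\vec z)=\sum_{\vec\alpha}\bigl(W_{\vec\alpha}(t')/\vec\alpha!\bigr)\vec z^{\,\vec\alpha}$ per output type, raised to the power $M'(t')$. The concrete benefit of your route is that the multinomial ``realisation'' factors distributing the actual paths of $M$ among the clusters emerge automatically from the factorial normalisations; in the paper these factors are handled only implicitly, through the informal claim that each pair $(P,P')$ contributes exactly one to exactly one summand of Expression~(\ref{eqn:interpMain}), so your version is arguably the more transparent accounting. The price, as you note yourself, is that the $\vec c!$ and $c'(t')!$ normalisations and the isolated-vertex/path distinction must be tracked exactly. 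Both routes have polynomial degree in $n$ (constantly many variables, allocations in $[0,n]^{|K|}$), so the final Kronecker-plus-interpolation step goes through identically in either case.
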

\begin{proof}
We show that each entry can be computed in $\mathsf{DLOGTIME}$-uniform 
\TCz\ which is contained in \Log\ (see e.g. Vollmer~\cite{Vollmer}).
 Our main tool in this lemma is an application of polynomial interpolation.

Notice that the rows/columns of the matrix $C$ are indexed by multisets of 
\emph{types}. Here a \emph{type} is an element from $K$. 
Therefore any such multiset
can be described by a vector $\vec{\alpha}$ of length $|K|$. Here each
entry of the vector represents the number of paths/cycles with that type inside 
the multiset.

In the following we will consistently make use of the notation,
${\vec{z}}^{{\vec{a}}}$ to denote: $\prod_{i \in I}{z_i^{a_i}}$,
where $I$ is the index set for both $\vec{z}, \vec{a}$.

We have the following:
\begin{lemma}\label{lem:coeffPoly}
$C[M,M']$ is the coefficient of 
$\vec{x}^{\vec{c'}}\vec{y}^{\vec{c}}$ in the following polynomial
$p_{\vec{c'}, \vec{c}}(\vec{x},\vec{y})$:
\[
\prod_{t,t' \in K}{\prod_{\vec{\alpha} \in [0,n]^{|K|}}{\sum_{\vec{d}_{\vec{\alpha}}}{\left(W_{\vec{\alpha}}(t')x_{t'}y_{t'}^{\alpha(t)}\right)}}^{d_{\vec{\alpha}}(t')}}
\]
\end{lemma}
To fix the notation we reiterate (items $1,2,3,4$ were defined previously and we introduce some new notation
in items $5,6,7,8,9$):
\begin{enumerate}
\item $K$ is the set of types, where $|K| =\binom{k}{2} + 2k + 1$.
\item $t,t' \in K$  are types in the input, output multiset (respectively $M,M'$).
\item A \emph{allocation}, $\alpha(t) \in [0,n]$ is the number of path-cycle covers of type $t \in K$.
\item $\vec{\alpha} \in [0,n]^{|K|}$ is a possible allocation vector indexed by $K$ in which each entry is 
$\alpha(t)$.
\item ${d}_{\vec{\alpha}}(t') \in [0,n]$ is the number of paths of type $t'$ formed from each allocation
of type $\vec{\alpha}$.
\item $\vec{d}_{\vec{\alpha}} \in [0,n]^{|K|}$ is a vector indexed by $K$ in which each entry is one of
${d}_{\vec{\alpha}}(t')$.
\item $W_{\vec{\alpha}}(t')$ is the number of ways to form a single path/cycle of type $t'$ from an allocation
vector $\vec{\alpha}$.
\item $\vec{W}_{\vec{\alpha}}$ is the vector indexed by $K$ in which each entry is one of
$W_{\vec{\alpha}}(t')$.
\item $\vec{c}, \vec{c'} \in [0,n]^{|K|}$ are vectors indicating number of paths/cycles in $M,M'$ respectively.
\end{enumerate}
To see that Lemma~\ref{lem:etaMatrixInL} follows from Lemma~\ref{lem:coeffPoly}
we use Kronecker substitution (see Fact~\ref{fact:kron}) to convert the multivariate polynomial
$p_{\vec{c'}, \vec{c}}(\vec{x},\vec{y})$ with $2|K|$ variables 
to a univariate polynomial. Then we use Lagrange interpolation to find
the coefficient of an arbitrary term - in particular, the term corresponding
to $\vec{x}^{\vec{c'}}\vec{y}^{\vec{c}}$ in \TCz\,
(see e.g. Corollary~6.5 in~\cite{HAB}).
\end{proof}

\begin{proof}(of Lemma~\ref{lem:coeffPoly})
Consider the following expression:
\begin{equation}\label{eqn:interpMain}
\sum_{\vec{d}\in \mathbf{D}}{\prod_{t' \in K}{\prod_{\vec{\alpha} \in [0,n]^{|K|}}{{W_{\vec{\alpha}}(t')}^{d_{\vec{\alpha}}(t')}}}}
\end{equation}
where the sum is taken over $\mathbf{D} \subseteq [0,n]^{|K|}$ consisting of all $\vec{d}$'s satisfying:
\begin{eqnarray}
\label{eqn:condOne}
\forall t', \sum_{\vec{\alpha} \in [0,n]^{|K|}}{d_{\vec{\alpha}(t')}} & = & c'(t') \\
\label{eqn:condTwo}
\forall t, \sum_{\vec{\alpha} \in [0,n]^{|K|}}{\sum_{t' \in K}{{\alpha}(t){d_{\vec{\alpha}}}(t')}} & = & {c}(t)
\end{eqnarray}
\begin{claim} $C[M,M']$ equals Expression~(\ref{eqn:interpMain}).
\end{claim}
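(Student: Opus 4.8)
The plan is to prove the Claim by exhibiting an explicit bijection between the combinatorial objects counted by $C[M,M']$ and the index sets of the summation in Expression~(\ref{eqn:interpMain}). Recall that $C[M,M']$ counts the number of ways the $\eta_{i,j}$ operation — which adds all edges between $i$-labelled and $j$-labelled vertices — can transform a path-cycle cover of type-multiset $M$ (encoded by the count vector $\vec c$) into a path-cycle cover of type-multiset $M'$ (encoded by $\vec c'$). The key observation is that applying $\eta_{i,j}$ to a collection of paths/cycles amounts to choosing, for each \emph{target} type $t' \in K$, some number of new paths/cycles of that type, where each new object of type $t'$ is built by selecting a sub-multiset of the available old objects (an ``allocation'' $\vec\alpha$) and then stitching them together in one of $W_{\vec\alpha}(t')$ ways. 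So a transformation is exactly the data of: for every $t'$ and every allocation vector $\vec\alpha$, a number $d_{\vec\alpha}(t')$ of type-$t'$ objects assembled from an $\vec\alpha$-allocation, together with the choice of which of the $W_{\vec\alpha}(t')$ stitchings is used for each — the latter contributing the factor ${W_{\vec\alpha}(t')}^{d_{\vec\alpha}(t')}$ in the product.

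\smallskip

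First I would set up the correspondence precisely: given a transformation realizing $M \to M'$ under $\eta_{i,j}$, group the resulting paths/cycles by their output type $t'$, and for each one record the multiset of input paths/cycles that were merged to form it; this multiset, read as a vector over $K$, is the allocation $\vec\alpha$, and incrementing $d_{\vec\alpha}(t')$ records one occurrence. Conversely, given a tuple $\vec d = (d_{\vec\alpha}(t'))_{t',\vec\alpha}$ in $\mathbf D$ together with a choice of stitching for each formed object, one reconstructs the transformation. Second, I would verify that the two consistency constraints cutting out $\mathbf D$ are exactly the well-definedness conditions of such a transformation. Constraint~(\ref{eqn:condOne}) says that summing $d_{\vec\alpha}(t')$ over all allocations $\vec\alpha$ gives $c'(t')$, i.e. the total number of output objects of type $t'$ matches what $M'$ prescribes. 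Constraint~(\ref{eqn:condTwo}) is the conservation law on the input side: each output object built from allocation $\vec\alpha$ consumes $\alpha(t)$ input objects of type $t$, so summing $\alpha(t) d_{\vec\alpha}(t')$ over all $\vec\alpha$ and all output types $t'$ must equal $c(t)$, the number of type-$t$ objects available in $M$. Third, I would note that the number of distinct stitchings contributing to a fixed $\vec d$ is precisely $\prod_{t'}\prod_{\vec\alpha} W_{\vec\alpha}(t')^{d_{\vec\alpha}(t')}$, since the $W$-choices for distinct formed objects are independent, and hence summing this product over $\vec d \in \mathbf D$ counts every transformation exactly once — giving Expression~(\ref{eqn:interpMain}).

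\smallskip

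The main obstacle I anticipate is not the counting algebra but ensuring the bijection is genuinely one-to-one, i.e. that no two distinct transformations collapse to the same $(\vec d, \text{stitchings})$ data and that every such data tuple comes from a legitimate transformation. The subtle points are: (i) that the ``allocation + stitching'' encoding does not overcount — for instance, merging the same set of input objects in a way that yields an isomorphic output should still be counted according to how many \emph{distinct edge-sets} $E'$ it produces, which is exactly what $W_{\vec\alpha}(t')$ is defined to tabulate, so this is a matter of invoking the definition of $W$ correctly; and (ii) the degenerate/boundary cases — the empty type (a cycle formed from a single path whose two endpoints both get labels $i$ and $j$, closing it up), isolated vertices versus single-vertex paths, and the fact that $\eta_{i,j}$ may also add edges \emph{within} an existing path or close it into a cycle. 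I would handle these by appealing to the earlier careful bookkeeping of $K = K_0 \uplus K_1 \uplus K_2$ (which already separates isolated vertices from endpoint-labelled paths precisely for this reason) and by treating a cycle as the $\vec\alpha$-allocation consisting of the constituent path(s) that $\eta_{i,j}$ closes, with $W_{\vec\alpha}(\langle\rangle)$ counting the ways to do so. Once the bijection is pinned down, the equality of Expression~(\ref{eqn:interpMain}) with $C[M,M']$ is immediate, and the remainder of Lemma~\ref{lem:coeffPoly} — reading off this sum as a coefficient of $p_{\vec c', \vec c}$ — follows by expanding the product and matching the monomial $\vec x^{\vec c'}\vec y^{\vec c}$ against constraints~(\ref{eqn:condOne}) and~(\ref{eqn:condTwo}).
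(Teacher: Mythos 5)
Your proposal follows the same route as the paper's own argument for this Claim: decompose each $M\to M'$ transformation according to the ``shape'' vector $\vec d=(d_{\vec\alpha}(t'))$, observe that conditions~(\ref{eqn:condOne}) and~(\ref{eqn:condTwo}) are exactly the consistency requirements cutting out $\mathbf D$, and count the transformations of a fixed shape by $\prod_{t',\vec\alpha}W_{\vec\alpha}(t')^{d_{\vec\alpha}(t')}$. Your reading of the two constraints and your treatment of the boundary types ($K_0,K_1,K_2$, cycles, isolated vertices) match the paper's.

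The gap is precisely at the step you single out as the crux. The claimed correspondence between transformations and pairs $(\vec d,\text{stitchings})$ is not a bijection: the data $(\vec d,\text{stitchings})$ records only the \emph{type-composition} $\vec\alpha$ of each group of merged input objects, not \emph{which} of the input paths/cycles of each type are placed in which group --- and these are distinguishable objects, since they occupy disjoint vertex sets and $C[M,M']$ must count distinct edge sets $P'$ extending a fixed $P$ for Proposition~\ref{prop:sharpX} to be usable. Hence your ``conversely'' direction cannot reconstruct the transformation from the tuple. For a fixed $\vec d$ the number of transformations is $N(\vec d)\cdot\prod_{t',\vec\alpha}W_{\vec\alpha}(t')^{d_{\vec\alpha}(t')}$, where $N(\vec d)$ counts the ways to partition the $c(t)$ objects of each type $t$ into groups of the prescribed compositions (a product of multinomial coefficients, corrected by $d_{\vec\alpha}(t')!$ for interchangeable groups), and this factor is in general not $1$. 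Concretely: let $M$ consist of four paths of type $\langle i,j\rangle$ and $M'$ of two paths of type $\langle i,j\rangle$, each obtained under $\eta_{i,j}$ by concatenating two input paths. There are $3$ ways to pair up the four paths and $W_{\vec\alpha}(\langle i,j\rangle)=2$ stitchings per pair, i.e.\ $12$ transformations, whereas the summand for this $\vec d$ is $2^2=4$. To be fair, the paper's own proof asserts the same count in a single sentence (``we consider each valid set of $d_{\vec\alpha}(t')$ vectors $\vec\alpha$ exactly once'') and does not supply the missing partition factor either; so you have faithfully reproduced the paper's argument, but as written neither version justifies the equality --- the step needs either the extra multinomial factor $N(\vec d)$ built into the expression, or an explicit argument for why it may be omitted.
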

\begin{proof}(of Claim)
The Condition~\ref{eqn:condOne} above asserts that the number of paths/cycles of type $t'$ present 
in $M'$ equals the sum over all $\vec{\alpha}$ of the number of paths/cycles of type $t'$ using resources 
described by $\vec{\alpha}$; the Condition~\ref{eqn:condTwo}  is essentially a conservation
of resource equation for every type $t$ saying that all the resources present in $M$ are used one 
way or the other in $M'$. 

Let $P,P'$ be path-cycle covers represented by $M,M'$ respectively such that we 
can obtain $P'$ from $P$, i.e., $P'$ is one possible path-cycle cover that can 
be obtained by an $\eta_{i,j}$ operation on $P$. This
transformation is described by a unique $\vec{d}$. 
 Then the pair contributes precisely one to $C[M,M']$.
On the other hand $\{P,P'\}$ satisfies 
\eqref{eqn:condOne},\eqref{eqn:condTwo} so contributes 
exactly one to the summand corresponding to the unique
$\vec{d}$ in Expression \ref{eqn:interpMain}. Since the pair $P,P'$ corresponds to
a unique $\vec{d}$ and contributes exactly one, the remaining summands would evaluate
to zero. This can be explained by observing that for all $\vec{d'} \neq \vec{d}$, the
number of paths of type $t$ in $\vec{d'}$ is not equal to the corresponding number in $\vec{d}$
for atleast one $t$. Hence, they
would contribute nothing to pair $P,P'$. 
\end{proof}

To complete the proof notice that the coefficient of $\vec{x}^{\vec{c'}}\vec{y}^{\vec{c}}$
is precisely expression~\ref{eqn:interpMain} under the conditions \ref{eqn:condOne},
\ref{eqn:condTwo}. Now, we explain the reasoning behind expression~\ref{eqn:interpMain}.
We have $d_{\vec{\alpha}}(t')$ paths of type $t'$, each of which can be formed in
$W_{\vec{\alpha}}(t')$ ways. Note that each of these $d_{\vec{\alpha}}(t')$ paths are 
formed from different $\vec{\alpha}$ (though the values of each of these 
$d_{\vec{\alpha}}(t')$ vectors $\vec{\alpha}$ is the same, they are inherently different 
as they are composed of mutually exclusive vertex sets) 
and we consider each valid set of $d_{\vec{\alpha}}(t')$ vectors $\vec{\alpha}$, exactly once.
Hence, we multiply with ${W_{\vec{\alpha}}(t')}^{d_{\vec{\alpha}}(t')}$ to get the final count.
\end{proof}

\subsection{Calculation of \texorpdfstring{${W}_{\vec{\alpha}}(t')$}{}}
${W}_{\vec{\alpha}}(t')$ denotes the number of ways to form exactly 
one type $t' \in K$ in $M'$ given a multiset of types consisting of 
$\vec{\alpha}(t)$ types for every type $t \in K$ in $M$. For simplicity 
of notation, let $t = \langle i,j \rangle \in K$ be a type and let
$\beta(i) = \alpha(\langle i, i \rangle) + 
\alpha^{(=0)}(\langle i, i \rangle)$ be the total
number of multisets of type $\langle i, i \rangle$, where
$\alpha(\langle i, i \rangle)$ (respectively $\alpha^{(=0)}(\langle i, i \rangle)$) 
denote paths (respectively single nodes) labeled $\langle i, i \rangle$ in $\vec{\alpha}$.
Note that this distinction is not necessary for types where the end points 
have different labels.

\begin{lemma}\label{lem:W}
For an operation $\eta_{i_0, j_0}$ in the clique-width expression 
and for any type $t' = \langle i, j \rangle$, ${W}_{\vec{\alpha}}(t')$ is given by
\[
{W}_{\vec{\alpha}}(\langle i, j \rangle) = {\dbl {\langle i, j \rangle}\dbr}_{\vec{\alpha}}{W}_{\vec{\alpha}}
% \Delta_{\vec{\alpha}}{(\langle i, j \rangle)}
\]
{where},
\[{W}_{\vec{\alpha}} =
\binom{\alpha(\langle i_0,j_0 \rangle) + \beta(i_0) + \beta(j_0)}{\alpha(\langle i_0,j_0 \rangle)}
\alpha(\langle i_0,j_0 \rangle)!  \beta(i_0)!  \beta(j_0)!  2^{\alpha(\langle i_0,i_0 \rangle) + 
\alpha(\langle j_0,j_0 \rangle)}
 \]
and, ${\dbl {\langle i, j \rangle} \dbr}_{\vec{\alpha}}$ is given by \footnote{In this section, the notation
$\dbl S \dbr$ represents the Boolean value of the statement $S$. ${\dbl t \dbr}_{\vec{\alpha}}$ represents
a Boolean valued normalizing factor associated with the type $t$ under the allocation vector $\vec{\alpha}$.},
\footnote{We adopt a convention in which types $t'$ (other than the type $\langle i_0,j_0\rangle$)
 not explicitly included in the expressions have an allocation $\alpha_{t'}$ equalling zero.}

\begin{itemize}
\item ${\dbl {\langle a, b\rangle} \dbr}_{\vec{\alpha}} = \dbl \beta(a) = \beta(b) \dbr$
\item ${\dbl {\langle a, a\rangle}^{(=0)} \dbr}_{\vec{\alpha}} = {\dbl \alpha^{(=0)}_{a,a} = 1 \wedge \alpha(\langle a, b \rangle) = 0 \dbr}$
\item ${\dbl {\langle a, a\rangle} \dbr}_{\vec{\alpha}} = \dbl \beta(a) = \beta(b) + 1 \dbr$
% \begin{align*}
% \Delta_{\vec{\alpha}}{(\langle a,a\rangle)} &=
% [\alpha(\langle a, b \rangle) = 0 \wedge \beta(a) = \beta(b) + 1 \wedge (\alpha(\langle a, a \rangle)\geq 1\vee \alpha(\langle a, a \rangle)^{(=0)}>1)]/2 \\
% &+ [\alpha(\langle a, b \rangle) \neq 0 \wedge \beta(a) = \beta(b) + 1]
% \end{align*}
\item ${\dbl {\langle i, a\rangle} \dbr}_{\vec{\alpha}} = 
    \dbl (\alpha(\langle i, a \rangle)=1 \wedge \beta(a)=\beta(b)) \vee (\alpha(\langle i, b \rangle)=1 \wedge \beta(a)=\beta(b)+1) \dbr$
\item ${\dbl {\langle i, j\rangle} \dbr}_{\vec{\alpha}} = 
    \dbl (\alpha(\langle i, a \rangle)=1 \wedge \alpha(\langle a, j \rangle)=1 \wedge \beta(a)=\beta(b) + 1) 
    \vee (\alpha(\langle i, a \rangle)=1 \wedge \alpha(\langle b, j \rangle)=1 \wedge \beta(a)=\beta(b)) \dbr$
\item ${\dbl {\langle i, i\rangle} \dbr}_{\vec{\alpha}} = 
    \dbl (\alpha(\langle i, a \rangle)=2 \wedge \beta(a)=\beta(b) + 1) 
    \vee (\alpha(\langle i, a \rangle)=1 \wedge \alpha(\langle i, b\rangle)=1 \wedge \beta(a)=\beta(b)) \dbr$    
\item ${\dbl {\langle \emptyset \rangle} \dbr}_{\vec{\alpha}} =  \dbl \beta(a) = \beta(b) \dbr$
\end{itemize}

where, $\{a,b\} = \{i_0,j_0\}$ in some order.
\end{lemma}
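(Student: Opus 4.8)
The plan is to prove Lemma~\ref{lem:W} by a careful combinatorial case analysis of what an $\eta_{i_0,j_0}$ operation can do to a multiset of path/cycle pieces when it is required to yield \emph{exactly one} new piece of a specified type $t'$. First I would fix notation: the operation $\eta_{i_0,j_0}$ adds every edge between a vertex labeled $i_0$ and a vertex labeled $j_0$. Since the pieces are paths and cycles of degree at most $2$, the only degree-$1$ vertices available to absorb a new edge are the endpoints of paths; interior vertices and vertices of cycles already have degree $2$ and cannot participate. Thus the only pieces that can be altered are: paths of type $\langle i_0,j_0\rangle$, paths of type $\langle i_0,i_0\rangle$ (including isolated $i_0$-vertices, counted by $\alpha^{(=0)}$), paths of type $\langle j_0,j_0\rangle$ (likewise), and more generally paths that have exactly one endpoint labeled $i_0$ or $j_0$ (types $\langle i_0,x\rangle$, $\langle j_0,x\rangle$). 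I would organize the argument around which of these pieces are consumed to build the single output piece of type $t'=\langle i,j\rangle$.

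The core idea is a separation into a \textbf{structural/feasibility factor} $\dbl\langle i,j\rangle\dbr_{\vec\alpha}$ and a \textbf{counting factor} $W_{\vec\alpha}$. The feasibility factor is a Boolean indicator that the given allocation $\vec\alpha$ is \emph{exactly} the right ``fuel'' to assemble one piece of type $t'$: it enforces that the $i_0$- and $j_0$-labeled endpoints match up into a single alternating chain. The key observation is that joining pieces via $i_0$--$j_0$ edges produces an alternating sequence of $i_0$-ends and $j_0$-ends, so forming one connected piece forces a balance condition on the number of available $i_0$-endpoints $\beta(i_0)$ versus $j_0$-endpoints $\beta(j_0)$ (equal for a path whose two final endpoints are neither $i_0$ nor $j_0$, off by one when one final endpoint is $i_0$ or $j_0$, and appropriately constrained when the output is a cycle or when the external types $\langle i,a\rangle$ etc. are involved). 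This is exactly what each bullet in the statement is recording — I would verify each bullet by tracing through the parity/balance of the alternating join for that particular output type, using $\{a,b\}=\{i_0,j_0\}$ to name which of the two roles the surviving endpoint(s) play. For the isolated-vertex case $\langle a,a\rangle^{(=0)}$ the extra constraint $\alpha(\langle a,b\rangle)=0$ records that an isolated vertex survives untouched only when nothing is glued to it.

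Once feasibility holds, the counting factor $W_{\vec\alpha}$ counts the number of \emph{distinct} ways to realize the join. Here I would argue that assembling the chain amounts to: choosing an interleaving order of the $\alpha(\langle i_0,j_0\rangle)$ two-ended $\langle i_0,j_0\rangle$-pieces among the $\beta(i_0)+\beta(j_0)$ single-role pieces at the $i_0$- and $j_0$-slots, which gives the multinomial $\binom{\alpha(\langle i_0,j_0\rangle)+\beta(i_0)+\beta(j_0)}{\alpha(\langle i_0,j_0\rangle)}$; then linearly ordering the pieces within each of the three groups, giving $\alpha(\langle i_0,j_0\rangle)!\,\beta(i_0)!\,\beta(j_0)!$; and finally choosing an orientation (which of its two $i_0$-ends, resp.\ $j_0$-ends, is used first) for each genuine two-endpoint $\langle i_0,i_0\rangle$- or $\langle j_0,j_0\rangle$-path, giving the factor $2^{\alpha(\langle i_0,i_0\rangle)+\alpha(\langle j_0,j_0\rangle)}$ — note isolated vertices contribute no such factor, which is why $\beta$ splits into the path part and the $\alpha^{(=0)}$ part but only the path part gets the $2$. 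I expect the main obstacle to be exactly this orientation/double-counting bookkeeping: making sure that reversing the entire assembled chain, or symmetries of a resulting cycle, are not overcounted, and that the $2$-power is attached to precisely the pieces with two distinct usable endpoints and not to $\langle i_0,j_0\rangle$-pieces (whose two ends are distinguishable by label, so they get no factor $2$ beyond the $!$ already counted). I would handle this by setting up the count as labeling each piece with its position and orientation in the final chain, then dividing out exactly the global symmetry of the target type (trivial for an $\langle i,j\rangle$ path with $i\neq j$, an involution for palindromic types and for cycles), and checking that the stated formula matches in each bullet case; this matching against~\cite{EGW01}'s analogous computation (which omits the cycle/empty type) provides a useful consistency check for all non-empty types.
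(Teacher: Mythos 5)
Your proposal follows essentially the same route as the paper's own proof: the same decomposition into a Boolean feasibility factor (balance conditions on $\beta(i_0)$ versus $\beta(j_0)$ forced by the alternating chain of $i_0$- and $j_0$-endpoints) and a counting factor obtained as (interleaving binomial) $\times$ (permutations within each group) $\times$ (orientation flips $2^{\alpha(\langle i_0,i_0\rangle)+\alpha(\langle j_0,j_0\rangle)}$, excluding isolated vertices). If anything, you are slightly more careful than the paper about the global reversal/cycle-symmetry double counting, which the paper's proof only addresses as a boundary case when $\alpha(\langle i_0,j_0\rangle)=0$.
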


\begin{proof}(of Lemma~\ref{lem:W})
Let's look at ${W}_{\vec{\alpha}}(\langle a,a \rangle)$ in detail. The ${W}_{\vec{\alpha}}(t)$
for all the other types $t$ are computed similarly. Type $\langle a,a \rangle$ can be formed 
from the alternating sequence of types $\langle a,a \rangle,\langle b,b \rangle,\langle a,a 
\rangle \ldots \langle a,a \rangle$ interleaved with some (possibly zero) $\langle a,b \rangle$ 
types. Thus, the equality $\beta(a) = \beta(b)+1$ should hold 
while $\alpha(\langle a, b \rangle)$ can be any
arbitrary non-negative integer. When $\alpha(\langle a, b \rangle)=0$, the condition
$\alpha(\langle a, a \rangle)\geq 1\vee \alpha(\langle a, a \rangle)^{(=0)}>1$ should hold to ensure that we are not 
considering the type $(\langle a,a\rangle)^{(=0)}$. 

The number of ways of interspersing $\alpha(\langle a, b \rangle)$ types among $\beta(a) + \beta(b)$ types is 
% \begin{equation*}
\[
\binom{\alpha(\langle a, b \rangle)+\beta(a)+\beta(b)}{\alpha(\langle a, b \rangle)}
\]
% \end{equation*}
We can do this for all permutations of the $\langle a,b \rangle,\langle a,a \rangle$ and
$\langle b,b \rangle$ types hence we multiply by:
%\begin{equation*}
$\alpha(\langle a, b \rangle)!  \beta(a)!  \beta(b)!.$
%\end{equation*}
Finally, we can flip the orientation of paths of types
$\langle a,a \rangle$ and $\langle b,b \rangle$ as they are equivalent
respectively to their flipped orientations. Note that single nodes cannot
be flipped. The proof is therefore completed by multiplying with:
%\begin{equation*}
$2^{\alpha(\langle a, a \rangle)+\alpha(\langle b, b \rangle)}.$
%\end{equation*}
Lastly, a boundary case occurs when $\alpha(\langle a, b \rangle)=0$ where every path can be flipped.
Here, it is easy to see that in considering every permutation of types while 
accounting for flips, we end up counting each path twice (including its reverse).
Hence, in this case we divide by $2$.  
\end{proof}

\section{Conclusion and Open Ends}\label{sec:concl}
\begin{itemize}
\item Can the \ShSACo\ bound be improved, to say, \Gl\ or Logspace?
\item How far can the Euler tour result be extended? To bounded clique-width graphs? Chordal graphs?
\item Can the determinant of bounded clique-width adjacency matrices be computed in better than \ShSACo?
(it is known to be \Log-hard even for bounded tree-width graphs from \cite{BD}).
\end{itemize}

\subparagraph*{Acknowledgements}
We would like to thank K. Narayan Kumar, Aniket Mane, M. Praveen and Prakash Saivasan for 
illuminating discussions regarding this paper. We would like to thank Eric Allender, Vikraman Arvind,
Nutan Limaye, Meena Mahajan, Pierre McKenzie, Partha Mukhopadhyay, Ramprasad Saptharishi, Srikanth 
Srinivasan and V.Vinay for reading a follow-up work that resulted from this paper and for their comments, 
from which we discovered an error in a previous version of this paper.Thanks are also due to anonymous referees for 
several comments that helped improve the content and presentation of the paper. This work is
partially funded by a grant from the Infosys Foundation.

% \newpage
\bibliographystyle{plain}
\bibliography{skeleton}

\end{document}